\documentclass{article}
\usepackage{graphicx} 

\usepackage[margin=1in]{geometry}
\usepackage[utf8]{inputenc}
\usepackage{bbm, bm}
\usepackage{graphicx}
\usepackage{color, xcolor}
\usepackage{amsmath, amsfonts, amssymb, amsthm, thmtools, mathtools}
\usepackage{algorithm, algpseudocode}
\usepackage{empheq}

\usepackage{singer-macros}

\def\draft{1}
\def\doubleblind{0}
\usepackage[margin=1in]{geometry}
\usepackage[utf8]{inputenc}
\usepackage{amsthm}
\usepackage{amsthm, amsmath, amssymb, mathtools}
\usepackage{bm,bbm}
\usepackage{graphicx}
\usepackage{color,xcolor}
\usepackage{paralist,enumitem}
\usepackage{mathrsfs}
\usepackage[normalem]{ulem}
\usepackage{tabularx}
\usepackage{tikz}
\usepackage{caption,comment}
\usepackage{algorithmicx}
\usepackage{algorithm}
\usepackage{algpseudocode}
\newcounter{algsubstate}

\algnewcommand\algorithmicinput{\textbf{Input:}}
\algnewcommand\Input{\item[\algorithmicinput]}

\algnewcommand\algorithmicoutput{\textbf{Output:}}
\algnewcommand\Output{\item[\algorithmicoutput]}

\algnewcommand\algorithmicgoal{\textbf{Goal:}}
\algnewcommand\Goal{\item[\algorithmicgoal]}

\usepackage[colorlinks=true, allcolors=blue]{hyperref}
\usepackage[capitalize,noabbrev,nameinlink]{cleveref}

\newcommand{\snote}[1]{\ifnum\draft=1\textcolor{orange}{[\textbf{Santhoshini:} #1]}\fi}
\newcommand{\mnote}[1]{\ifnum\draft=1\textcolor{red}{[\textbf{Madhur:} #1]}\fi}
\newcommand{\noah}[1]{\ifnum\draft=1\textcolor{teal}{[\textbf{Noah:} #1]}\fi}

\newcommand{\CSPopt}{\mathsf{opt}^{\mathrm{CSP}}}
\newcommand{\LPopt}{\mathsf{opt}^{\mathrm{LP}}}

\newcommand{\Exp}{\mathop{\mathbb{E}}}
\newcommand{\cP}{\mathcal{P}}

\newcommand{\cD}{\mathcal{D}}
\newcommand{\cF}{\mathcal{F}}

\newcommand{\N}{\mathbb{N}}

\newcommand{\R}{\mathbb{R}}
\newcommand{\Z}{\mathbb{Z}}

\newcommand{\maxF}{\textsc{Max-CSP}(\cF)}
\newcommand{\gapMaxF}[2]{{(#1,#2)\textrm{-}\maxF}}
\newcommand{\gbMaxF}{\gapMaxF{\gamma}{\beta}}

\newcommand{\Cut}{\textsc{Cut}}
\newcommand{\MaxCut}{\textsc{Max-Cut}}
\newcommand{\gapMaxCut}[2]{(#1,#2)\textsc{-Max-Cut}}
\newcommand{\DiCut}{\textsc{DiCut}}
\newcommand{\MaxDiCut}{\textsc{Max-DiCut}}
\newcommand{\gapMaxDiCut}[2]{(#1,#2)\textsc{-Max-DiCut}}
\newcommand{\MaxkAnd}{\textsc{Max-}k\textsc{AND}}
\newcommand{\MaxkMon}{\textsc{Max-}k\textsc{Monarchy}}

\newcommand{\CSPval}{\mathsf{val}^\mathrm{CSP}}

\newcommand{\yes}{\textbf{YES}}
\newcommand{\no}{\textbf{NO}}

\newcommand{\sgyf}{S_\gamma^Y(\cF)}
\newcommand{\sbnf}{S_\beta^N(\cF)}
\newcommand{\kgyf}{K_\gamma^Y(\cF)}
\newcommand{\kbnf}{K_\beta^N(\cF)}

\newcommand{\vecb}{\mathbf{b}}

\newcommand{\vecj}{\mathbf{j}}

\newcommand{\vecmu}{\boldsymbol{\mu}}

\usepackage{amsthm}
\usepackage{thmtools,thm-restate}
\numberwithin{equation}{section}


\usepackage[colorlinks=true, allcolors=blue]{hyperref}
\usepackage[capitalize,noabbrev,nameinlink]{cleveref}

\declaretheoremstyle[bodyfont=\it,qed=\qedsymbol]{noproofstyle}

\declaretheorem[name=Observation,numbered=no]{observation*}

\declaretheorem[numberlike=equation]{theorem}

\declaretheorem[name=Theorem,numbered=no]{theorem*}

\declaretheorem[name=Lemma,numbered=no]{lemma*}

\declaretheorem[name=Corollary,numbered=no]{corollary*}

\declaretheorem[name=Proposition,numbered=no]{proposition*}

\declaretheorem[name=Claim,numbered=no]{claim*}

\declaretheorem[numberlike=equation]{conjecture}
\declaretheorem[name=Conjecture,numbered=no]{conjecture*}

\declaretheorem[name=Question,numbered=no]{question*}

\declaretheoremstyle[bodyfont=\it]{defstyle} 

\declaretheorem[numberlike=equation,style=defstyle]{definition}
\declaretheorem[unnumbered,name=Definition,style=defstyle]{definition*}

\declaretheorem[unnumbered,name=Notation=defstyle]{notation*}

\declaretheorem[unnumbered,name=Construction,style=defstyle]{construction*}

\declaretheoremstyle[
  headfont=\normalfont\bfseries,
  bodyfont=\normalfont,
  postheadspace=1em,
  qed=$\lozenge$
]{rmkstyle}

\declaretheorem[
  style=rmkstyle,
  numberlike=equation
]{remark}

\title{Sketching approximations and LP approximations for finite CSPs are related}
\date{}
\sloppy
\usepackage[style=alphabetic, backend=biber, minalphanames=3, maxalphanames=4, maxbibnames=99, maxcitenames=99]{biblatex}

\addbibresource{csps.bib}
\def\doubleblind{0}
\def\draft{1}
\ifnum\doubleblind=0
\author{Noah G. Singer\thanks{Carnegie Mellon University, Pittsburgh, PA, USA. Email: \texttt{ngsinger@cs.cmu.edu}. Supported in part by an NSF Graduate Research Fellowship (Award
DGE 2140739).}
\and Madhur Tulsiani\thanks{Toyota Technological Institute, Chicago, Illinois, USA. Email: \texttt{madhurt@ttic.edu}.}
\and Santhoshini Velusamy\thanks{Toyota Technological Institute, Chicago, Illinois, USA. Email: \texttt{santhoshini@ttic.edu}. Work supported in part by NSF award CCF 2348475.}
}
\fi
\ifnum\doubleblind=1
\author{Anonymous authors}
\fi

\begin{document}

\maketitle
\begin{abstract}
    We identify a connection between the approximability of CSPs in two models:
    (i) sublinear space streaming algorithms,
    and (ii) the basic LP relaxation.
    We show that whenever the basic LP admits an integrality gap, there is an $\Omega(\sqrt{n})$-space sketching lower bound. We also show that all existing linear space streaming lower bounds for Max-CSPs can be lifted to integrality gap instances for basic LPs.
    For bounded-degree graphs, by combining the distributed algorithm of \textcite{Yos11} for approximately solving the basic LP with techniques described in \textcite{SSSV25} for simulating a distributed algorithm by a sublinear space streaming algorithm on bounded-degree instances of Max-DICUT,
    it appears that there are sublinear space streaming algorithms implementing the basic LP, for every CSP. 
    
    Based on our results, we conjecture the following dichotomy theorem:
    Whenever the basic LP admits an integrality gap, there is a linear space single-pass streaming lower bound,
    and when the LP is roundable, there is a sublinear space streaming algorithm.
\end{abstract}

\section{Introduction}

In recent years, there has been substantial attention on the ability of \emph{streaming algorithms} to approximate the value of \emph{constraint satisfaction problems (CSPs)}.

\paragraph*{CSPs.}
Consider the following computational problem, which we denote $\maxF$ (see \Cref{sec:csp} below for a more formal definition):
Fix an \emph{alphabet size} $q \in \N$, an \emph{arity} $k \in \N$, and a \emph{predicate family} $\calF \subseteq ([q]^k)^{\{0,1\}}$.
An \emph{instance} $\Phi$ of $\maxF$ on $n$ \emph{variables} $x_1,\ldots,x_n \in [q]$
consists of $m$ \emph{constraints} $C_1,\ldots,C_m$.
Each constraint $C$ applies a predicate $f \in \calF$ to $k$ distinct variables $x_{i_1},\ldots,x_{i_k}$,
and is \emph{satisfied} iff $f(x_{i_1},\ldots,x_{i_k}) = 1$.
The goal of the $\maxF$ problem is to satisfy as many constraints in $\Phi$ as possible, given $\Phi$.
In particular, we write $\CSPopt_\Phi$ for the highest fraction of constraints which may be satisfied.
For $1 \ge \gamma > \beta \ge 0$, the $\gbMaxF$ problem is the computational problem of distinguishing the cases $\CSPopt_\Phi \ge \gamma$ and $\CSPopt_\Phi \le \beta$.
For $0 \le \alpha \le 1$, we say $\maxF$ is \emph{$\alpha$-approximable} in a computational model if $\gapMaxF{\gamma}{\alpha\gamma-\epsilon}$ is tractable for every $0 \le \gamma \le 1$ and $\epsilon > 0$.

\paragraph*{Streaming algorithms.}
In the \emph{plain streaming model} for $\gbMaxF$, the list of constraints $C_1,\ldots,C_m$ in $\Phi$ is presented to an algorithm sequentially.
The streaming algorithm has memory space which is bounded as a function of $n$, the number of variables in $\Phi$,
and it must distinguish between the cases $\CSPopt_\Phi \ge \gamma$ and $\CSPopt_\Phi \le \beta$ with constant advantage.
The interesting regime is where the memory space is $o(n)$, as once $\tilde{O}(n)$ space is allowed, there is a simple distinguisher based on uniformly subsampling constraints (see, e.g.,~\cite{CGS+22-linear-space}).

We call this model \emph{plain} because it admits various interesting modifications.
Three which are particularly relevant are:
\begin{itemize}
\item when the constraints are promised to arrive in \emph{randomly order}~\cite{KKS15,SSSV23-random-ordering,SSSV25},
\item when \emph{multiple passes} are allowed over the list of constraints~\cite{SSSV23-random-ordering,SSSV25,FMW25},
\item and when the algorithm is required to be a \emph{sketching algorithm}, a special type of ``composable'' streaming algorithm~\cite{CGSV24} (see \Cref{sec:streaming dichotomy} below for a definition).
\end{itemize}
In this paper, to keep the discussion focused, we mostly restrict our attention to the plain model.

\paragraph*{Our problem.}
Some problems admit no ``nontrivial'' algorithms in the streaming setting.
The canonical example of such a problem is $\MaxCut$ ($q=k=2$, and $\calF = \{f_{\Cut}\}$ where $f_{\Cut}(b_1,b_2) = 1$ iff $b_1 \ne b_2$),
whose hardness has been studied in~\cite{KK15,KKS15,KKSV17,KK19,FMW25}.
In particular, \textcite{KK19} showed that for every $\epsilon > 0$, no $o(n)$-space streaming algorithm can solve the $\gapMaxCut1{\frac12+\epsilon}$ problem~\cite{KK19}.
This is the best possible hardness result, because \emph{every} $\MaxCut$ instance $\Phi$ has $\CSPopt_\Phi \ge \frac12$.
(This is because a \emph{random} assignment satisfies $\frac12$ of the constraints in expectation.
Hence, the $\gapMaxCut1{\frac12-\epsilon}$ problem is vacuous.)
More generally, for a predicate family $\calF$, define the trivial approximability threshold
\begin{equation}\label{eq:rho}
\rho(\cF) \coloneqq \lim_{n \to \infty} \bracks*{ \inf_{\Psi \textrm{ instance of }\maxF\text{ on }n\text{ variables}} \{\CSPopt_\Psi\} }.
\end{equation}
Analogous impossibility results are known for  $\gapMaxF1{\rho(\calF)+\epsilon}$ for other problems, like $\textsc{Max-}q\textsc{UniqueGames}$ and $\textsc{Max-}q\textsc{Coloring}$~\cite{GT19,SSV24-jour-version,CGSV24,CGS+22-linear-space}.

Other problems do admit nontrivial algorithms.
For instance, $\MaxDiCut$ ($q=k=2$, and $\calF = \{f_{\DiCut}\}$ where $f_{\DiCut}(b_1,b_2) = 1$ iff $b_1 = 1$ and $b_2 = 0$)
has been studied extensively in~\cite{GVV17,CGV20,SSSV23-dicut,SSSV23-random-ordering,SSSV25}.
The trivial threshold for $\MaxDiCut$ is $\rho(\{f_{\DiCut}\}) = \frac14$,
but $\MaxDiCut$ is:
\begin{itemize}
    \item $(\frac49-\epsilon)$-approximable via $O(\log n)$-space algorithms~\cite{CGV20}
    (and $(\frac49+\epsilon)$-approximations require $\Omega(\sqrt n)$ space),
    \item $0.485$-approximable via $\tilde{O}(\sqrt n)$-space algorithms~\cite{SSSV23-dicut},
    \item and (on bounded-degree instances) $(\frac12-\epsilon)$-approximable via $o_\epsilon(n)$-space algorithms~\cite{SSSV25}.
\end{itemize}
($\frac12$ is a natural threshold here, because $\MaxCut$ lower bounds of~\cite{KK19} also imply that $\gapMaxDiCut1{\frac12+\epsilon}$ requires $\Omega(n)$-space for every $\epsilon > 0$.)
Analogous results are known for other problems, in particular $\MaxkAnd$~\cite{Sin23-kand}.

The motivating question for this paper is to find a unifying explanation for these upper and lower-bounds:
\begin{quote}
    \emph{Why do some CSPs admit nontrivial streaming algorithms in the $o(n)$-space (single-pass) setting, and others do not?}
\end{quote}
In this paper, we conjecture an explanation: The $o(n)$-space streaming approximability of a CSP is intimately tied to its approximability via linear programming (LP) relaxations,
and in particular via a certain ``canonical'' LP relaxation, called the \emph{basic LP}, which is essentially the optimal polynomial-sized relaxation~\cite{GT17,CLRS16,KMR22}.

In the weaker setting of $o(\sqrt n)$-space sketching algorithms, \textcite{CGSV24} obtained a complete, and even efficiently decidable, characterization (see \Cref{thm:cgsv} below for a formal statement).
Their characterization recovers the results of~\cite{CGV20} for $\MaxDiCut$, but cannot explain the advantages obtained in $\tilde{O}(\sqrt n)$ space by~\cite{SSSV23-dicut}.

In the next subsection, we first state our conjecture formally, and then describe the partial evidence that we have obtained towards it.

\subsection{Statement of the conjecture}

We defer formally defining the basic LP until \Cref{sec:basic lp} below.
For the purposes of this introduction, we only need that it is a \emph{relaxation}:
For every instance $\Phi$ of $\maxF$, the basic LP assigns a relaxed value $\LPopt_\Phi \in [0,1]$ which always satisfies $\LPopt_\Phi \ge \CSPopt_\Phi$.
We then make the following definition:

\begin{definition}[Integrality gap instance for basic LP]\label{def:integrality gap instance}
    Let $\calF$ be a predicate family and $1 \ge \gamma > \beta \ge 0$.
    A \emph{$(\gamma,\beta)$-integrality gap instance} for the basic LP for $\maxF$ is an instance of $\Phi$ of $\maxF$ s.t.
    \begin{enumerate}
        \item \emph{Soundness:} $\CSPopt_\Phi \le \beta$.
        \item \emph{Completeness:} $\LPopt_\Phi \ge \gamma$.
    \end{enumerate}
    We say the basic LP for $\maxF$ is \emph{$(\gamma,\beta)$-roundable} if there does \emph{not} exist a $(\gamma,\beta)$-integrality instance,
    i.e., every instance $\Phi$ of $\maxF$ with $\CSPopt_\Phi \le \beta$ has $\LPopt_\Phi < \gamma$.
\end{definition}

Our conjecture is then the following:

\begin{conjecture}[Dichotomy theorem for single-pass streaming algorithms]\label{conj}
    Let $\calF \subseteq \{f : [q]^k \to \{0,1\}\}$.
    For every $1 \ge \gamma > \beta \ge 0$:
    \begin{enumerate}
        \item If there exists a $(\gamma,\beta)$-integrality gap instance for the basic LP for $\maxF$,
        then for every $\epsilon > 0$, every streaming algorithm for $(\gamma-\epsilon,\beta+\epsilon)$-$\maxF$ uses $\Omega(n)$ space. \label{conj:lb}
        \item Otherwise, (the basic LP for $\maxF$ is $(\gamma,\beta)$-roundable, and) for every $\epsilon > 0$,
        there is a $o_\epsilon(n)$-space sketching algorithm which solves $(\gamma+\epsilon,\beta-\epsilon)$-$\maxF$. \label{conj:conv}
    \end{enumerate}
\end{conjecture}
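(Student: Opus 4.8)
\quad The statement is phrased as a dichotomy, but by \Cref{def:integrality gap instance} the two cases are, for each fixed $1\ge\gamma>\beta\ge 0$, mutually exclusive and exhaustive by definition: either a $(\gamma,\beta)$-integrality gap instance for the basic LP exists (case \Cref{conj:lb}) or the basic LP is $(\gamma,\beta)$-roundable (case \Cref{conj:conv}). Thus there is no genuine ``dichotomy step'', and the content reduces to proving the $\Omega(n)$-space lower bound in the first case and exhibiting the $o_\epsilon(n)$-space sketching algorithm in the second; I would attack the two directions independently.

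\emph{The lower bound, \Cref{conj:lb}.} The plan is to convert a $(\gamma,\beta)$-integrality gap instance $\Phi_0$ into a pair of distributions over large instances of $\maxF$ that are statistically indistinguishable to small-space single-pass streaming algorithms, and then run the Fourier-analytic / communication-complexity machinery developed for streaming $\MaxCut$ lower bounds in \cite{KK19} and its successors. Concretely, an optimal basic-LP solution for $\Phi_0$ provides consistent local distributions $\lambda_C$ on $[q]^k$, one per constraint of $\Phi_0$: the \textbf{YES} distribution is designed so that pasting these local pieces together across a random blow-up of $\Phi_0$ yields an assignment satisfying a $(\gamma-o(1))$-fraction of constraints, while the \textbf{NO} distribution is designed so that its (low) CSP value is inherited from $\CSPopt_{\Phi_0}\le\beta$; indistinguishability of the two is argued exactly as in the hidden-(hyper)matching framework. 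The weak form of this---an $\Omega(\sqrt n)$ \emph{sketching} lower bound---is what can be proved unconditionally, by routing the construction through the distributional characterization of \cite{CGSV24} (cf.\ \Cref{thm:cgsv}); as a sanity check in the other direction, one verifies that each known $\Omega(n)$ streaming lower bound for a Max-CSP is in fact witnessed by an explicit basic-LP integrality gap instance (a partial converse to \Cref{conj:lb}). The obstacle is promoting $\Omega(\sqrt n)$ to $\Omega(n)$: the existing $\Omega(n)$-space arguments demand that the hard distribution satisfy delicate structural conditions (e.g.\ vanishing low-degree Fourier mass and matching low-order moments) that a generic integrality gap instance need not supply, so one would need either a more flexible $\Omega(n)$ reduction or a normalization lemma bringing integrality gap instances into the required form.

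\emph{The algorithm, \Cref{conj:conv}.} The plan has two steps. Step~1 is an $o_\epsilon(n)$-space streaming algorithm that estimates $\LPopt_\Phi$ to within an additive $\epsilon'$. On bounded-degree instances this should follow by generalizing, from $\MaxDiCut$ to an arbitrary predicate family, the combination of the distributed/local algorithm of \cite{Yos11} for approximately solving the basic LP with the simulation of a distributed algorithm by an $o(n)$-space streaming algorithm from \cite{SSSV25}; handling unbounded degree is less immediate and is one of the points that needs new work. Step~2 turns the estimate into a decision: since the basic LP is a relaxation, $\CSPopt_\Phi\ge\gamma+\epsilon$ forces $\LPopt_\Phi\ge\gamma+\epsilon$, whereas $(\gamma,\beta)$-roundability (and hence $(\gamma,\beta-\epsilon)$-roundability, as $\beta-\epsilon\le\beta$) forces every instance with $\CSPopt_\Phi\le\beta-\epsilon$ to have $\LPopt_\Phi<\gamma$; a compactness argument (or an explicit robust-rounding lemma) upgrades the latter to a fixed gap $\LPopt_\Phi\le\gamma-\delta$ with $\delta=\delta(\calF,\gamma,\beta)>0$ independent of $n$, so estimating $\LPopt_\Phi$ to within $\epsilon'<(\epsilon+\delta)/2$ separates the two cases. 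The obstacle here has two faces: the \cite{SSSV25} simulation is currently tailored to bounded-degree $\MaxDiCut$---extending it to arbitrary predicate families and, more seriously, to instances of unbounded degree seems to require genuinely new ideas---and it is moreover unclear whether the resulting algorithm is a \emph{sketching} algorithm, as \Cref{conj:conv} demands, rather than merely a streaming one.

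Overall, I expect the hardest point to be obtaining the $\Omega(n)$ (rather than $\Omega(\sqrt n)$) lower bound of \Cref{conj:lb} from a generic integrality gap instance, since the $\Omega(n)$-space streaming lower-bound techniques are substantially less modular than the $\Omega(\sqrt n)$ sketching techniques that already handle the easy half of that case.
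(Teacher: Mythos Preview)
Your proposal is well-aligned with the paper, but you should recognize that the statement is explicitly a \emph{conjecture}: the paper does not prove it and does not claim to. What the paper offers is exactly the partial evidence and the obstacles you have identified. Specifically: (i) from an integrality gap instance one obtains a pair of distributions in $\kgyf\cap\kbnf$, yielding the $\Omega(\sqrt n)$ sketching lower bound via \Cref{thm:cgsv} (this is \Cref{thm:int gap impl CGSV}); (ii) the known $\Omega(n)$ streaming lower bounds of \cite{CGS+22-linear-space} and the one-wise-independence lower bounds of \cite{CGSV24} are each witnessed by explicit basic-LP integrality gaps (\Cref{thm:one-wise,thm:wide}); and (iii) the algorithmic direction on bounded-degree instances follows from Yoshida's distributed basic-LP solver~\cite{Yos11} combined with the distributed-to-streaming simulation of~\cite{SSSV25}. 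Your roadmap reproduces all three points and correctly names the two open obstacles the paper leaves: promoting $\Omega(\sqrt n)$ to $\Omega(n)$ in \Cref{conj:lb}, and removing the bounded-degree assumption in \Cref{conj:conv}.

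Two minor corrections. First, the extension of the \cite{SSSV25} simulation from $\MaxDiCut$ to general predicate families is not where the difficulty lies---the paper treats this as routine, since Yoshida's algorithm already handles the basic LP for every $\maxF$; the genuine obstacle is purely the unbounded-degree case (the paper notes that even for $\MaxDiCut$ this was only done for $0$-round local algorithms in~\cite{SSSV23-dicut}). Second, your compactness step upgrading $\LPopt_\Phi<\gamma$ to $\LPopt_\Phi\le\gamma-\delta$ uniformly in $n$ is a detail the paper does not spell out; it is plausible but would need justification, since the space of instances grows with $n$ and a naive compactness argument does not immediately apply.
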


We remark that \Cref{conj:conv} of \Cref{conj} is already known in the special case where $\Phi$ has constant maximum degree $D$
(i.e., every variable occurs in at most $D$ constraints).
The proof of this follows from combining the distributed algorithm of Yoshida~\cite{Yos11} for approximately solving the basic LP
with the connection between distributed algorithms and sublinear space algorithms on bounded-degree instances described in~\cite{SSSV25} (see also \Cref{rmk:alg} below).
(We defer a full proof of this fact to the final version of this paper.)
The challenge in fully proving \Cref{conj:conv} is then removing the bounded-degree assumption;
for the very special case of $0$-round distributed algorithms for \textsc{Max-DiCut}, this was done in~\cite{SSSV23-dicut}.

\begin{remark}\label{rmk:alg}
    At the initial time of writing, we were not aware of the work of Yoshida~\cite{Yos11},
    which was used in the work of~\cite{FMW25-dichotomy} on multi-pass approximations.
    We were aware of the special case of $\MaxkAnd$, where a packing LP equivalent to the basic LP is presented in \cite{Tre98-alg}.
\end{remark}

\begin{remark}
    \textcite{LO17} implement the basic \emph{semidefinite programming} (SDP) relaxation algorithm in polylogarithmic space and polynomial time;
    their work builds on the quasilinear-time algorithm due to \textcite{Ste10}.
    This seems to lead to polylogarithmic-space, polynomial-pass algorithms for implementing the basic SDP.
    Since the basic SDP for $\MaxCut$ achieves a nontrivial approximation (indeed, the ratio is $\approx 0.878 \gg \frac12$)~\cite{GW95},
    $\MaxCut$ does admit a nontrivial approximation in this model.
    This is in contrast to the result of~\textcite{FMW25},
    who show that $\MaxCut$ is approximation-resistant to $p$-pass, $o(n^{1/3}/p)$-space streaming algorithms.
\end{remark}

\subsection{Partial results towards the conjecture}

Our first result shows that (basic) LP approximability is stronger than sketching approximability in $o(\sqrt{n})$ space for every $\maxF$:

\begin{theorem}\label{thm:int gap impl CGSV}
    Let $q, k \in \N$, $\calF \subseteq ([q]^k)^{\{0,1\}}$.
    Suppose there exists a $(\gamma,\beta)$-integrality gap instance for the basic LP for $\maxF$.
    Then for every $\epsilon > 0$,
    every sketching algorithm for $(\gamma-\epsilon,\beta+\epsilon)$-$\maxF$ on $n$-variable instances uses $\Omega(\sqrt n)$ space.
\end{theorem}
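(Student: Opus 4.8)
The plan is to deduce the bound from the sketching characterization of \textcite{CGSV24} (\Cref{thm:cgsv}), which in particular gives the following sufficient condition for sketching-hardness in terms of distributions over $\cF$-constraint types: $\gbMaxF$ requires $\Omega(\sqrt n)$ space whenever $\kgyf \cap \kbnf \neq \emptyset$, i.e., whenever there is a distribution $\cD_Y$ whose associated ``hidden assignment'' satisfies (in expectation) a $\ge\gamma$ fraction of constraints, together with a distribution $\cD_N$ whose associated random $\maxF$-instances have optimum $\le\beta$, such that $\cD_Y$ and $\cD_N$ have identical single-coordinate marginals. The $\epsilon$-loss is then free: since $K^Y_{\gamma'}(\cF) \supseteq \kgyf$ whenever $\gamma'\le\gamma$ and $K^N_{\beta'}(\cF) \supseteq \kbnf$ whenever $\beta'\ge\beta$, any point of $\kgyf \cap \kbnf$ also lies in $K^Y_{\gamma-\epsilon}(\cF) \cap K^N_{\beta+\epsilon}(\cF)$, so \Cref{thm:cgsv} applied at parameters $(\gamma-\epsilon,\beta+\epsilon)$ gives the claimed lower bound for $\gapMaxF{\gamma-\epsilon}{\beta+\epsilon}$.

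So fix a $(\gamma,\beta)$-integrality gap instance $\Phi$ (\Cref{def:integrality gap instance}) with constraints $C_1,\dots,C_m$, where $C_\ell$ applies $f_\ell$ to the ordered distinct scope $\vecj^\ell = (j^\ell_1,\dots,j^\ell_k)$, and fix an optimal basic-LP solution: singleton distributions $\lambda_v\in\Delta([q])$ and local distributions $\lambda_{C_\ell}\in\Delta([q]^k)$, consistent in the sense that the $i$-th marginal of $\lambda_{C_\ell}$ is $\lambda_{j^\ell_i}$; by completeness its objective value is $\LPopt_\Phi\ge\gamma$. For the YES distribution I would take $\cD_Y$: sample $\ell\in[m]$ uniformly, sample $\veca\sim\lambda_{C_\ell}$, output $(f_\ell,\veca)$. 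Then the hidden assignment for $\cD_Y$ satisfies a $\frac1m\sum_\ell\Pr_{\veca\sim\lambda_{C_\ell}}[f_\ell(\veca)=1]=\LPopt_\Phi\ge\gamma$ fraction of constraints in expectation, so $\cD_Y\in\sgyf$; and by LP-consistency, the $i$-th coordinate marginal of $\cD_Y$ equals $\mathbf{p}_i\coloneqq\frac1m\sum_\ell\lambda_{j^\ell_i}$. Write $\mathbf{p}=(\mathbf{p}_1,\dots,\mathbf{p}_k)\in\kgyf$ for this marginal.

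For the NO distribution I would take the ``decoupled'' version $\cD_N$: sample $\ell\in[m]$ uniformly, sample $\veca=(a_1,\dots,a_k)$ with each $a_i\sim\lambda_{j^\ell_i}$ \emph{independently}, output $(f_\ell,\veca)$. By construction $\cD_N$ has exactly the same coordinate marginals $\mathbf{p}$ as $\cD_Y$, so $(\cD_Y,\cD_N)$ is a matched pair, and it only remains to verify $\cD_N\in\sbnf$; I expect this to be the main obstacle. The point is that $\cD_N$ was built out of $\Phi$, for which $\CSPopt_\Phi\le\beta$: using the variational description of the NO-value of a distribution (the value achievable on the associated random instance by a product / i.i.d.-biased assignment, correlations across scopes being useless once scopes are random), take any such product assignment, say with per-coordinate distribution $\vecmu\in\Delta([q])$, and pull it back to the assignment of $\Phi$ that sets each $\sigma_v$ independently according to the appropriate combination of $\lambda_v$ and $\vecmu$; a direct computation shows that its expected value on $\Phi$ equals the value of the product assignment on the $\cD_N$-random instance, hence is $\le\CSPopt_\Phi\le\beta$. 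Since this holds for every $\vecmu$, we get $\cD_N\in\sbnf$, whence $\mathbf{p}\in\kgyf\cap\kbnf$ and \Cref{thm:cgsv} finishes the proof. The delicate points in executing this are reconciling $\cD_Y,\cD_N$ with the precise definitions of $\sgyf,\sbnf$ and of the marginal map in \Cref{thm:cgsv} (any symmetrization over the $k$ coordinates, or working with general $[q]$-symmetries rather than Boolean negation patterns, is cosmetic but must be tracked), and matching the informal variational description of the NO-value to its actual definition in \cite{CGSV24}; for the latter it suffices that the NO-value of $\cD_N$ is at most the best product-assignment value on the $\cD_N$-random instance, which by the pull-back is at most $\CSPopt_\Phi$.
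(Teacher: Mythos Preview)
Your proposal is correct and follows essentially the same route as the paper: from the integrality gap instance $\Phi$ and its LP solution $(\lambda_v,\lambda_{C_\ell})$, build the YES distribution by sampling a constraint and then $\veca\sim\lambda_{C_\ell}$, build the NO distribution by sampling a constraint and then each coordinate independently from the singleton marginals $\lambda_{j^\ell_i}$, and verify $\cD_Y\in\sgyf$, $\cD_N\in\sbnf$, and matching marginals, then invoke \Cref{thm:cgsv}. One small imprecision worth tightening: in the paper's definition of $\sbnf$ the variational object is not a single bias $\vecmu\in\Delta([q])$ but a family $(\calP_\sigma)_{\sigma\in[q]}$ of distributions indexed by the revealed symbol, and the correct pull-back samples for each variable $v$ first $x_v\sim\lambda_v$ and then $z_v\sim\calP_{x_v}$; with that adjustment your ``direct computation'' goes through exactly as in the paper.
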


\textcite{CGSV24} conjectured that (conditional) inapproximability (i.e., $\mathbf{NP}$- or unique games-hardness) in the polynomial time setting should lead to (unconditional) inapproximability by sketching algorithms.
Our (unconditional) \Cref{thm:int gap impl CGSV} is a stronger version of their conjecture.

Our next two results translate some known streaming lower bounds (from~\cite{CGSV24} and \cite{CGS+22-linear-space})
to the setting of the basic LP.
Firstly, \Cref{thm:cgsv:one-wise} below, due to~\cite{CGSV24}, gives a sufficient condition on $\calF$, called ``weakly supporting one-wise independence'', for streaming approximation-resistance of $\maxF$ in $o(\sqrt n)$ space.

\begin{remark}\label{rmk:monarchy}
While \textcite{CGSV24} show that ``supporting one-wise independence'' is also a necessary condition for approximation resistance of \emph{symmetric} Boolean CSPs,
the same authors, along with Shahrasbi \cite{CGS+22-monarchy} prove that this is not true in general.
In particular, they show that $\MaxkMon$ (for $k \ge 5$) (which does not support one-wise independence) is \emph{sketching} approximation-resistant in $o(\sqrt{n})$ space.
Because $\MaxkMon$ is LP-approximable~\cite{ABM12,ST16}, \Cref{conj} would imply that it is approximable by $o(n)$-space streaming algorithms,
giving the first example of a CSP which is approximation-resistant to $o(\sqrt n)$-space sketching algorithms but not to $o(n)$-space streaming algorithms.
\end{remark}

\begin{remark}\label{rmk:ltf}
If \cref{conj} is true, then this condition would likely also not be tight for approximation resistance in $o(n)$ space, 
since \textcite{Pot19} showed the existence of a CSP which does not support one-wise independence and is inapproximable by its canonical SDP under the unique games conjecture.
\end{remark}

We show that whenever $\calF$ satisfies the one-wise independence condition, it is also approximation-resistant to the basic LP:

\begin{theorem}\label{thm:one-wise}
    Let $k,q \in \N, \cF \subseteq ([q]^k)^{\{0,1\}}$.
    If $\maxF$ fulfills the hypotheses of \Cref{thm:cgsv:one-wise} below,
    $\maxF$ is also approximation-resistant to the basic LP.
\end{theorem}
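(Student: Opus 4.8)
The plan is to produce, for every $\epsilon > 0$, a $(1-\epsilon, \rho(\cF)+\epsilon)$-integrality gap instance for the basic LP for $\maxF$; exhibiting such instances for all $\epsilon$ is precisely what it means for $\maxF$ to be approximation-resistant to the basic LP. (Note $\CSPopt_\Phi \ge \rho(\cF)$ for every instance $\Phi$, so no integrality gap can have soundness below $\rho(\cF)$; conversely, once the gap ratio approaches $\rho(\cF)$ the LP certifies nothing nontrivial.) The instances I would use are essentially the hard (``NO'') instances already constructed in the proof of \Cref{thm:cgsv:one-wise}: the one-wise-independence hypothesis there supplies a distribution $\cD$ over predicate/negation-pattern pairs whose single-coordinate marginals are uniform, and a NO instance is built by repeatedly sampling $(f, \vecb) \sim \cD$ together with a uniformly random tuple of $k$ distinct variables and adding the corresponding constraint. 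For such instances the soundness bound $\CSPopt \le \rho(\cF) + \epsilon$ is established in (the proof of) \Cref{thm:cgsv:one-wise} --- it is exactly the assertion that one-wise independence forces the NO instances to have near-trivial value --- so the only genuinely new ingredient is the completeness bound $\LPopt \ge 1 - \epsilon$.

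To prove the completeness bound, I would first recall the explicit form of the basic LP (defined in \Cref{sec:basic lp}): a feasible point assigns a distribution $\lambda_i$ over $[q]$ to each variable $x_i$ and a distribution $\mu_C$ over $[q]^k$ to each constraint $C$, subject to marginal-consistency --- the $j$-th coordinate marginal of $\mu_C$ equals $\lambda_{i_j}$ whenever $C$ acts on $x_{i_1}, \dots, x_{i_k}$ --- and its value is the average over $C$ of the $\mu_C$-mass on the satisfying assignments of $C$. I would then exhibit an explicit feasible point on the instance above: take every $\lambda_i$ to be the uniform distribution on $[q]$, and take $\mu_C$, for the constraint coming from a sampled pair $(f, \vecb)$, to be the one-wise-independent local distribution on the satisfying assignments of $C$ that the hypothesis of \Cref{thm:cgsv:one-wise} provides. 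Marginal consistency holds because that local distribution has uniform coordinate marginals, so it agrees with the uniform $\lambda_i$'s no matter which variables $C$ touches; and the objective value is $1$ (or $1 - o(1)$, the vanishing loss being absorbed into the $-\epsilon$ in $\gamma$) because the local distribution is supported on satisfying assignments of $C$. Hence $\LPopt \ge 1 - \epsilon$, and combined with the soundness bound this is the required integrality gap instance.

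The step I expect to be the main obstacle is reconciling the precise hypothesis of \Cref{thm:cgsv:one-wise} --- CGSV's notion of ``weakly supporting one-wise independence,'' phrased in their language of distributions over $\cF \times \{0,1\}^k$ and the associated marginal polytopes $S^Y$, $S^N$, $K^Y$, $K^N$ --- with the object actually needed above: a family of constraint-local distributions that are simultaneously (i) mutually consistent with one fixed vertex marginal and (ii) supported on satisfying assignments. These are morally the same --- both merely assert the existence of consistent local distributions certifying LP value $1$, which is exactly what the basic LP measures --- but carrying it out requires unwinding the definitions, in particular verifying that the common vertex marginal can be taken uniform (using the uniform-marginals part of the hypothesis) and treating the general-alphabet case, where ``negation patterns'' become tuples of permutations of $[q]$. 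A secondary point is to confirm that the $\rho(\cF)$ in the soundness bound is the honest trivial threshold for $\maxF$ and not merely the trivial threshold among instances drawn from $\cD$; but this is already handled inside the proof of \Cref{thm:cgsv:one-wise}, so it would simply be cited.
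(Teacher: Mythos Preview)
Your LP feasible point---uniform vertex marginals $\lambda_i$, together with the one-wise-independent local distribution $\cD_f$ for each constraint with predicate $f$---is exactly the one the paper uses, so the core of the argument is the same. The paper's proof is, however, considerably more direct on the soundness side: it observes that this feasible point is valid for \emph{every} instance $\Phi$ of $\maxF$ (after passing WLOG to the strongly-supporting subfamily $\cF'$ with $\rho(\cF')=\rho(\cF)$), so $\LPopt_\Phi = 1$ holds universally. One then simply invokes the definition of $\rho(\cF)$ to pick any instance with $\CSPopt_\Phi \le \rho(\cF)+\epsilon$, and that instance is automatically a $(1,\rho(\cF)+\epsilon)$-integrality gap. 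This route bypasses both of your anticipated obstacles: there is no need to reconstruct or cite the random $\no$ instances from the CGSV proof for soundness, and there is nothing to ``reconcile''---the paper's definition of supporting one-wise independence already hands you each $\cD_f \in \Delta([q]^k)$ directly, with no negation patterns or marginal-polytope language to unwind. Your hedge to objective value $1-o(1)$ is also unnecessary; the value is exactly $1$.
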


\Cref{thm:one-wise} is proven in detail as \Cref{thm:one-wise:formal} below.
We also translate certain linear space streaming lower bounds due to~\cite{CGS+22-linear-space}, in the form of \Cref{thm:cgsvv} below,
into lower bounds against the basic LP:

\begin{theorem}\label{thm:wide}
    Let $k,q \in \N, \cF \subseteq (\Z_q^k)^{\{0,1\}}$.
    The $\gbMaxF$ problem shown to be hard in \Cref{thm:cgsvv} below
    is also hard for the basic LP for $\maxF$.
\end{theorem}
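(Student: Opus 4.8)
The plan is to show that the \no{} instances appearing in the hard distribution of \Cref{thm:cgsvv} are themselves $(\gamma,\beta)$-integrality gap instances for the basic LP in the sense of \Cref{def:integrality gap instance}. Since \Cref{thm:cgsvv} already guarantees that a typical such instance $\Phi$ satisfies $\CSPopt_\Phi \le \beta$, it suffices to exhibit a feasible basic-LP solution for $\Phi$ of value at least $\gamma$. The two structural features of the construction I would rely on are: (i) every constraint of $\Phi$ applies a predicate $f \in \cF$ to a $k$-tuple of distinct variables together with a ``shift'' $c \in \Z_q^k$, so that $C$ is satisfied by an assignment $a \in \Z_q^k$ to its variables iff $f(a+c)=1$; and (ii) the completeness value $\gamma$ is, by design of the reduction, the satisfaction probability $\Pr_{(f,z)\sim\cD_Y}[f(z)=1]$ of the ``wide'' template distribution $\cD_Y$ over (predicate, assignment) pairs underlying the \yes{} instances, with the \no{} instances using the same distribution over predicates as $\cD_Y$. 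The crucial consequence of $\cD_Y$ being wide (shift-invariant) is that, for every predicate $f$ in its support, the conditional distribution $\cD_{Y,f}$ of $z$ on $\Z_q^k$ has \emph{uniform} one-coordinate marginals.

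Granting this, the LP solution writes itself. Set $\mu_i = \unif(\Z_q)$ for every variable $x_i$, and for every constraint $C$ of $\Phi$ applying predicate $f$ with shift $c$, let $\lambda_C$ be the law on $\Z_q^k$ obtained by sampling $z \sim \cD_{Y,f}$ and outputting $z-c$. Feasibility is immediate: $\lambda_C$ is a probability distribution, and its marginal on the $j$-th coordinate is the $j$-th marginal of $\cD_{Y,f}$ translated by $-c_j$, hence uniform, hence equal to $\mu_{i_j}$. The value of $C$ under $\lambda_C$ is $\Pr_{a\sim\lambda_C}[f(a+c)=1]=\Pr_{z\sim\cD_{Y,f}}[f(z)=1]$; averaging over the constraints of $\Phi$, whose predicate frequencies match the marginal of $\cD_Y$ on $\cF$, gives exactly $\Pr_{(f,z)\sim\cD_Y}[f(z)=1]=\gamma$. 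Thus $\LPopt_\Phi \ge \gamma$, so $\Phi$ witnesses a $(\gamma,\beta)$-integrality gap and the $\gbMaxF$ problem of \Cref{thm:cgsvv} is hard for the basic LP. This is the same mechanism used to prove \Cref{thm:one-wise}: there the one-wise-independence hypothesis supplies a template supported entirely on satisfying assignments, forcing $\gamma=1$, whereas here the template is merely ``wide'', so $\gamma$ may be smaller --- but uniform one-wise marginals is all the feasibility check uses.

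I expect the only genuine work to lie in interfacing precisely with the framework of \cite{CGS+22-linear-space}: from the statement and proof of \Cref{thm:cgsvv} I need to confirm that (a) the completeness value $\gamma$ is indeed the value of a template distribution with uniform one-wise marginals --- i.e.\ that the ``wide'' (``padded one-wise pair'') hypothesis there furnishes exactly such a $\cD_Y$, and that the \yes{} instances attain $\CSPopt \ge \gamma$ via $\cD_Y$ and a planted assignment --- and (b) the \no{} instances are legitimate instances of $\maxF$ whose constraints have the shifted-predicate form above and whose predicate frequencies agree with $\cD_Y$. Once those are nailed down, the computation above is routine. Two remaining points are bookkeeping rather than substance: handling a multi-predicate family $\cF$ (conditioning $\cD_Y$ on each predicate and noting the conditional one-coordinate marginals stay uniform), and the standard caveat that streaming reductions may output constraints that repeat a variable, removable by the usual collision/variable-copying arguments so that $\Phi$ is a valid instance of $\maxF$.
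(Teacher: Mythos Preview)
Your mechanism is right in spirit --- uniform one-wise marginals make the basic LP feasible, and a width-type template gives value $\ge \omega(\cF)$ --- but you are routing the argument through the internals of the \cite{CGS+22-linear-space} construction when nothing of the sort is needed. The paper's proof is strictly simpler: it never touches the hard \no{} instances from the streaming lower bound. Instead, for soundness it just invokes the definition of $\rho(\cF)$ to pick \emph{any} instance $\Phi$ with $\CSPval_\Phi \le \rho(\cF)+\epsilon$; for completeness it observes that \emph{every} instance of $\maxF$ has $\LPopt_\Phi \ge \omega(\cF)$, by taking the local distribution for a constraint $(f,\vecj)$ to be uniform on the coset $\{\vecb_f + a^k : a \in \Z_q\}$ where $\vecb_f$ maximizes $\omega_{\vecb}(f)$. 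These cosets all have uniform $\Z_q$ marginals, so feasibility is automatic, and the objective on each constraint is exactly $\omega(f) \ge \omega(\cF)$.

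By contrast, your plan imports several assumptions about \cite{CGS+22-linear-space} that you flag as needing verification (template $\cD_Y$, matching predicate frequencies between \yes{} and \no{} instances, collision removal), and it also introduces a ``shift'' $c \in \Z_q^k$ attached to each constraint, which is not part of this paper's CSP framework (here a constraint is just $(f,\vecj)$). None of this is wrong if the \cite{CGS+22-linear-space} details line up, but it is all avoidable: once you notice that the width-based local distribution has uniform marginals regardless of the instance, the predicate-frequency matching and the specific \no{} distribution become irrelevant, and you get LP value at least $\min_f \omega(f)=\omega(\cF)$ without any averaging argument. In short, you identified the right LP witness but overcomplicated where the sound instance comes from.
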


\Cref{thm:wide} is proven in detail as \Cref{thm:wide:formal} below.

\subsection*{Bibliographic note}
This is an independent and concurrent work to the recent work of~\cite{FMW25-dichotomy},
who gave a dichotomy theorem showing that multi-pass CSP approximability is equivalent to basic LP approximability.
Our results are independent of theirs, except for that, as mentioned in \Cref{rmk:alg} below,
we were not initially aware of the Yoshida~\cite{Yos11} result giving  a self-contained distributed algorithm for all basic LPs.
The connection between packing LP solvers, distributed algorithms, and CSPs was already suggested in the work of~\cite{SSSV25}.

\section{Background}

\subsection{CSPs}\label{sec:csp}
For positive integers $q$ and $k$, a $q$-ary {\em constraint satisfaction problem} (CSP) is given by a (finite) family of predicates 
$\cF \subseteq ([q]^k)^{\{0,1\}}$. 
A constraint $C$ on $x_1,\ldots,x_n$ is given by a pair $(f,\vecj)$, with $f \in \cF$ and $\vecj = (j_1,\ldots,j_k) \in [n]^k$ where the coordinates of $\vecj$ are all distinct.\footnote{
    To allow repeated variables in a constraint, note that one can turn $\cF$ into $\cF'$ by introducing new functions corresponding to all the possible replications of variables of functions in $\cF$.}
An assignment $\vecb \in [q]^n$ satisfies $C = (f,\vecj)$ if $f(b_{j_1},\ldots,b_{j_k})=1$. 
To every finite set $\cF$, we associate a maximization problem $\maxF$ that is defined as follows: An instance $\Psi$ of $\maxF$ consists of $m$ constraints $C_1,\ldots,C_m$ applied to $n$ variables $x_1,x_2,\dots,x_n$ along with $m$ non-negative integer weights $w_1,\ldots,w_m$. The value of an assignment $\vecb \in [q]^{n}$ on an instance $\Psi = (C_1,\ldots,C_m; w_1,\ldots,w_m)$, denoted $\CSPval_\Psi(\vecb)$, is the fraction of weight of constraints satisfied by $\vecb$. We study the approximability of CSPs through the following gapped promised problems: for $0 \leq \beta < \gamma \leq 1$, the $(\gamma,\beta)$-approximation version of $\maxF$, denoted $\gbMaxF$, is the task of distinguishing between instances from $\Gamma = \{\Psi \mid \CSPopt(\Psi) \geq \gamma\}$ and instances from $B = \{\Psi \mid \CSPopt(\Psi) \leq \beta\}$. 

\subsection{Dichotomies in the streaming setting}\label{sec:streaming dichotomy}

In the streaming setting, an instance $\Psi = (C_1,\ldots,C_m)$ is presented as a stream $\sigma_1,\sigma_2,\ldots,\sigma_m$ with $\sigma_i = (f(i),\vecj(i))$ representing the $i$th constraint.
The key complexity parameter is the space required to solve $\gbMaxF$.
A streaming algorithm is said to solve $\gbMaxF$ correctly if it outputs the correct answer with probability at least $2/3$ (i.e., it errs with probability at most $1/3$).

A sketching algorithm is a special class of a streaming algorithm, where the algorithm's output is determined by a small sketch it produces of the input stream, and the sketch itself has the property that the sketch of the concatenation of two streams can be composed from the sketches of the two component streams. A sketching algorithm is said to be a linear sketching algorithm if the composition function is linear. Their formal definitions can be found in \cite{CGSV24}.
\textcite{CGSV24} gave the following dichotomy characterization for the approximability of CSPs by small-space sketching algorithms.\footnote{
    Their dichotomy theorem also holds for \emph{dynamic} streaming algorithms.
    In this setting, constraints may be inserted, even multiple times, and later deleted, and algorithms are required to be correct on the final instance, under the promise that constraints were deleted fewer times than they were inserted at all intermediate stages of the streaming process.
    The input stream can be unboundedly large in this setting even while maintaining polynomially bounded integer weights (e.g., by inserting and deleting the same constraint an arbitrary number of times).}

First, we begin with some definitions.

\begin{definition}[Space of $\yes$/$\no$ distributions]\label{def:sysn}
For $q,k \in \N$, $\gamma \in [0,1]$ and $\cF\subseteq([q]^k)^{\{0,1\}}$,  let 
$$\sgyf \coloneqq \left\{\cD \in \Delta(\cF \times [q]^k) \mid \Exp_{(f,\ba) \sim \calD} [f(\ba)] \geq \gamma\right\}.$$
For $\beta \in [0,1]$, let 
$$\sbnf \coloneqq \left\{\cD \in \Delta(\cF \times [q]^k) \mid \Exp_{(f,\ba) \sim \calD} \bracks*{ \Exp_{\substack{\bb\in [q]^k, \\ b_\ell \sim \calP_{a_\ell}}} \bracks*{ f(\bb) } } \leq \beta\right\}.$$
\end{definition}

\begin{remark}
The definition of the $\yes$ and $\no$ distributions in \cite{CGSV24} is different than but equivalent to our \Cref{def:sysn}.
\end{remark}

Given a distribution $\calD \in \Delta(\calF \times [q]^k)$, the \emph{marginal vector} of $\calD$, denoted $\bmu(\calD)$,
is a $\calF \times [k] \times [q]$ array whose $(f,\ell,\sigma)$-th entry is $\Pr_{(f,\ba)\sim\calF}[a_\ell = \sigma]$.

\begin{definition}[Marginals of Yes/NO Distributions]\label{def:marginals}
For $\gamma,\beta \in [0,1]$ and $\cF\subseteq([q]^k)^{\{0,1\}}$,  we let 
$$\kgyf = \{\vecmu(\cD) \in \R^{|\cF| \times k \times q} \mid \cD \in \sgyf \}
\mbox{ and }
\kbnf = \{\vecmu(\cD) \in \R^{|\cF| \times k \times q} \mid \cD \in \sbnf \}.$$
\end{definition}

We are now ready to state the dichotomy theorem from \cite{CGSV24}.

\begin{theorem}[Dichotomy theorem for $o(\sqrt n)$-space sketching, \cite{CGSV24}]\label{thm:cgsv}
For every $q,k \in \N$, every family of functions $\cF \subseteq ([q]^k)^{\{0,1\}}$ and for every $0\leq\beta<\gamma\leq1$, the following hold:
\begin{enumerate}
    \item If $\kgyf \cap \kbnf = \emptyset$, then $(\gamma,\beta)$-$\maxF$ admits a linear sketching algorithm
    that uses $O(\log^3 n) $ space\footnote{In particular, the space complexity is $O(\log^3 n)$ bits, or $O(\log^2 n)$ words where each word is $O(\log n)$ bits long. Crucially while the constant in the $O(\cdot)$ depends on $k$, $\gamma$ and $\beta$, the exponent is a universal constant.} on instances on $n$ variables. \label{thmpart:positive-result-detailed}
    \item If $\kgyf \cap \kbnf \neq \emptyset$, then for every $\epsilon>0$, every sketching algorithm for $(\gamma-\epsilon,\beta+\epsilon)$-$\maxF$ 
    requires $\Omega(\sqrt{n})$ space\footnote{Again, the constant hidden in the $\Omega$ notation depends on $k$, $\gamma$ and $\beta$.} on instances on $n$ variables. Furthermore, if $\gamma = 1$, then every sketching algorithm for $(1,\beta+\epsilon)$-$\maxF$ requires $\Omega(\sqrt{n})$ space. 
    \label{thmpart:negative-result-detailed}
 \end{enumerate}
\end{theorem}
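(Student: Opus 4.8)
The statement is a dichotomy, so the plan is to treat the two directions separately, using the geometry of the marginal polytopes $\kgyf$ and $\kbnf$ (\cref{def:marginals}) as the organizing device. The first point to record is that $\sgyf$ and $\sbnf$ are closed convex subsets of the compact simplex $\Delta(\cF\times[q]^k)$ --- convex because each is cut out by a single linear inequality in $\cD$, closed because the defining functionals are continuous --- so their images $\kgyf,\kbnf$ under the linear marginal map $\cD\mapsto\vecmu(\cD)$ are compact convex subsets of $\R^{|\cF|\times k\times q}$. This is what makes the ``empty versus nonempty'' split robust and lets a single affine functional govern the algorithm.

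For \cref{thmpart:positive-result-detailed}, suppose $\kgyf\cap\kbnf=\emptyset$. Strict separation of two disjoint compact convex sets gives an affine functional $L$ on $\R^{|\cF|\times k\times q}$ and a margin $\delta>0$ with $L\ge\delta$ on $\kgyf$ and $L\le-\delta$ on $\kbnf$. The algorithm then only needs to approximate, from a small sketch of the stream, the quantity $\max_{\vecb}L\bigl(\vecmu(\cD_{\Psi,\vecb})\bigr)$, where $\cD_{\Psi,\vecb}$ denotes the distribution over $\cF\times[q]^k$ got by drawing a weighted-random constraint of the instance $\Psi$ and reading off its predicate together with the values that $\vecb$ assigns to its variables. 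The reason a tiny \emph{linear} sketch suffices is that, $L$ being linear, this maximum over assignments decomposes across the variables, and after the standard ``folding'' reduction it reduces to estimating $\ell_1$-type norms (over the $\Z_q$-characters) of a length-$n$ vector recording the signed constraint-incidences of each variable; such norms are estimable to a $(1\pm\epsilon')$ factor in $\poly(\log n)$ space by classical norm-sketching primitives. Running $O(|\cF|kq)=O(1)$ of these in parallel, with $\epsilon'$ chosen well below $\delta$ and a union bound over the $O(1)$ estimates, the algorithm learns which side of the separating hyperplane the instance lies on. Correctness is exactly the content of the definitions: a $\gamma$-satisfiable $\Psi$ has a (near-)optimal $\vecb$ with $\cD_{\Psi,\vecb}\in\sgyf$, so its marginals lie in $\kgyf$ where $L\ge\delta$; while a $\Psi$ with $\CSPopt_\Psi\le\beta$ has no such $\vecb$ and, by the way $\sbnf$ is set up, its assignments only witness marginal vectors inside $\kbnf$ where $L\le-\delta$. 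Tallying the sketch, the probability amplification, and the word length gives the claimed $O(\log^3 n)$ bits.

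For \cref{thmpart:negative-result-detailed}, suppose $\vecmu\in\kgyf\cap\kbnf$, witnessed by $\cD_Y\in\sgyf$ and $\cD_N\in\sbnf$ with $\vecmu(\cD_Y)=\vecmu(\cD_N)=\vecmu$. The plan follows the usual streaming-hardness template. Build two instance distributions $\mathcal{Y},\mathcal{N}$ on $n$ variables: a draw from $\mathcal{Y}$ fixes a uniformly random planted assignment $\vecb^\star$ and then draws i.i.d.\ constraints whose (predicate, local-assignment-relative-to-$\vecb^\star$) pair is distributed as $\cD_Y$, so $\vecb^\star$ satisfies a $\ge\gamma$ fraction in expectation and concentration gives $\CSPopt\ge\gamma-\epsilon$ with high probability (when $\gamma=1$, take $\cD_Y$ supported on satisfying pairs, so $\vecb^\star$ is perfectly satisfying); a draw from $\mathcal{N}$ is built the same way from $\cD_N$ but with no planted assignment, and the defining property of $\sbnf$ (that a ``random-like'' instance with one-wise marginals $\vecmu$ has value concentrated at the $\sbnf$-objective) gives $\CSPopt\le\beta+\epsilon$ with high probability. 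Since $\cD_Y$ and $\cD_N$ share the marginal vector $\vecmu$, the two families agree on every first-order (one-wise) statistic, so telling them apart requires detecting higher-order correlations; this is reduced to a communication lower bound tailored to sketching algorithms --- of the Boolean Hidden (Hyper)Matching type, and its $q$-ary, arity-$k$ generalizations from earlier CSP streaming lower bounds --- in which the planted assignment is visible only through ``collisions'' among constraints sharing a variable, which by a birthday argument needs $\Omega(\sqrt n)$ space.

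The main obstacle, in both directions, is to make these templates go through for \emph{arbitrary} $\cF$ rather than merely symmetric Boolean predicates. On the algorithmic side one must verify that ``$\max$ over assignments of $L$ of the marginals'' really collapses to a constant number of sketchable norms, and that the $\CSPopt_\Psi\le\beta$ case genuinely lands inside $\kbnf$ rather than only near it --- this needs the folding/symmetrization machinery and a careful analysis of how an optimal rounding interacts with $L$. On the lower-bound side the obstacle is engineering the hard communication problem so that (i) its $\yes$/$\no$ instances push through the reduction to exactly $\sgyf$/$\sbnf$ at the target parameters $\gamma,\beta$, and (ii) its hardness threshold is \emph{precisely} $\Theta(\sqrt n)$ for sketching algorithms (rather than the $\Omega(n)$ one might hope for); the Fourier-analytic core of this reduction, together with the resulting $k$-dependence in the hidden constants, is where the bulk of the technical work sits.
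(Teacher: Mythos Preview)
The paper does not prove this theorem at all: \Cref{thm:cgsv} is stated in \Cref{sec:streaming dichotomy} as a cited background result from \cite{CGSV24}, with no accompanying proof or proof sketch. There is therefore nothing in the paper to compare your proposal against.

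As a separate matter, your sketch is a plausible high-level outline of the \cite{CGSV24} argument, but several of the steps you present as routine are not. On the algorithmic side, the assertion that ``$\max_{\vecb} L(\vecmu(\cD_{\Psi,\vecb}))$ decomposes across the variables'' into a constant number of sketchable $\ell_1$-type norms is doing a lot of unjustified work: variables appear in multiple constraints and in multiple coordinate positions, so the functional does not obviously split variable-by-variable, and the actual \cite{CGSV24} algorithm requires a more delicate ``bias-based'' analysis together with a grid search over rounding distributions. Similarly, your claim that every assignment to a $\beta$-unsatisfiable instance yields marginals \emph{inside} $\kbnf$ is not the correct statement of soundness --- what is true is a bound on the $L$-value itself, not membership in $\kbnf$. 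On the lower-bound side, your description is closer to the truth, but the reduction is to a specific communication problem (a $q$-ary, arity-$k$ generalization of Boolean Hidden Matching, sometimes called Randomized Mask Detection) whose design and Fourier-analytic lower bound are the technical heart of the result; ``birthday argument'' is a reasonable intuition for the $\sqrt n$ threshold but is not the proof.
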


\begin{definition}[{\cite[Def.~3.11]{CGSV24}}]
    Let $q, k \in \N$.
    A predicate $f : [q]^k \to \{0,1\}$ \emph{supports one-wise independence} if there exists a distribution $\calD \in \Delta([q]^k)$ s.t.:
    \begin{enumerate}
        \item \emph{$\calD$ is supported on satisfying assignments of $f$:} $\Exp_{\bb \sim \calD} [f(\bb)] = 1$.
        \item \emph{$\calD$ has uniform marginals:}
        For every $\ell \in [k]$ and $\sigma \in [q]$, $\Pr_{\bb = (b_1,\ldots,b_k)\sim\calD} [b_\ell=\sigma] = \frac1q$.
    \end{enumerate}
    A family $\calF \subseteq ([q]^k)^{\{0,1\}}$ \emph{strongly supports one-wise independence}
    if every $f \in \calF$ supports one-wise independence,
    and \emph{weakly supports one-wise independence} if there exists a subfamily $\calF'\subseteq \calF$ s.t. $\rho(\calF) = \rho(\calF')$ which strongly supports one-wise independence.
\end{definition}

\begin{theorem}[{\cite[Thm.~3.12]{CGSV24}}]\label{thm:cgsv:one-wise}
    Let $q,k \in \N$ and $\calF \subseteq ([q]^k)^{\{0,1\}}$.
    If $\calF$ weakly supports one-wise independence, then $\maxF$ is approximation-resistant to $o(\sqrt n)$-space streaming algorithms,
    i.e., every algorithm for $\gapMaxF{1}{\rho(\calF)+\epsilon}$ requires $\Omega(\sqrt n)$ space.
\end{theorem}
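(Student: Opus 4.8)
The plan is to feed a matching pair of $\yes$/$\no$ distributions into the negative half of the sketching dichotomy (\Cref{thm:cgsv}) to get an $\Omega(\sqrt n)$ \emph{sketching} lower bound, and then to revisit the communication-complexity reduction underlying it to upgrade this to a lower bound against arbitrary streaming algorithms. Throughout fix $\gamma = 1$ and $\beta = \rho(\calF)+\epsilon$. By hypothesis there is a subfamily $\calF' \subseteq \calF$ with $\rho(\calF') = \rho(\calF)$ such that every $f \in \calF'$ supports one-wise independence; for each such $f$ fix a distribution $\calD_f \in \Delta([q]^k)$ supported on satisfying assignments of $f$ and having uniform coordinate marginals.

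Given a weighting $\lambda \in \Delta(\calF')$ over predicates (to be pinned down below), I would set $\calD^Y$ to be the law of ``draw $f \sim \lambda$, then $\veca \sim \calD_f$'' and $\calD^N$ to be the law of ``draw $f \sim \lambda$, then $\veca \sim \unif([q])^{\otimes k}$''. Since every $\calD_f$ has uniform marginals, $\calD^Y$ and $\calD^N$ carry the same marginal vector (mass $\lambda(f)/q$ at each $(f,\ell,\sigma)$), so $\vecmu(\calD^Y) = \vecmu(\calD^N)$. As $\calD^Y$ is supported on satisfying assignments, $\Exp_{(f,\veca)\sim\calD^Y}[f(\veca)] = 1$, i.e.\ $\calD^Y \in S_1^Y(\calF)$. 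For $\calD^N$, whatever coordinate-wise rerandomization appears in \Cref{def:sysn} turns the product input $\veca$ into an i.i.d.\ draw $\vecb \sim \nu^{\otimes k}$ from a single distribution $\nu \in \Delta([q])$, so the quantity defining membership of $\calD^N$ in $S_\beta^N(\calF)$ equals $\Exp_{f \sim \lambda}\bracks*{\Exp_{\vecb \sim \nu^{\otimes k}}[f(\vecb)]}$, which is at most $\max_{\nu' \in \Delta([q])}\Exp_{f \sim \lambda}\bracks*{\Exp_{\vecb \sim (\nu')^{\otimes k}}[f(\vecb)]}$.

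It remains to choose $\lambda$ so that this upper bound is at most $\rho(\calF)$. Here I would invoke the standard identity
\[
\rho(\calF') \;=\; \min_{\lambda \in \Delta(\calF')}\ \max_{\nu \in \Delta([q])}\ \Exp_{f \sim \lambda}\bracks*{\Exp_{\vecb \sim \nu^{\otimes k}}[f(\vecb)]} ,
\]
whose ``$\ge$'' direction holds because on any instance, assigning every variable i.i.d.\ from the maximizing $\nu$ satisfies exactly this fraction of (the weight of) constraints in expectation (using that the coordinates within a constraint are distinct), and whose ``$\le$'' direction holds because a random instance of $\textsc{Max-CSP}(\calF')$ with $\Theta_\delta(n)$ constraints drawn from the minimizing $\lambda^\star$ has, by a first-moment estimate and a union bound over all $q^n$ assignments, optimum value at most $\max_\nu \Exp_{f\sim\lambda^\star}\bracks*{\Exp_{\vecb\sim\nu^{\otimes k}}[f(\vecb)]}+\delta$ with high probability. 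Taking $\lambda = \lambda^\star$ makes the relevant quantity for $\calD^N$ at most $\rho(\calF') = \rho(\calF)$, so $\vecmu(\calD^Y) = \vecmu(\calD^N) \in K_1^Y(\calF) \cap K_{\rho(\calF)}^N(\calF)$; the negative part of \Cref{thm:cgsv}, and in particular its ``furthermore'' clause for the case $\gamma = 1$, then yields that every sketching algorithm for $\gapMaxF{1}{\rho(\calF)+\epsilon}$ uses $\Omega(\sqrt n)$ space.

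To remove the word ``sketching'', I would rerun the streaming lower-bound argument underlying the negative half of \Cref{thm:cgsv} (a descendant of the \MaxCut lower bounds of \cite{KKS15,KK19}): a streaming algorithm for $\gapMaxF{1}{\rho(\calF)+\epsilon}$ yields a low-communication protocol for a one-way communication problem of ``hidden matching'' type, into which one plants $\calD_f$-distributed constraints in the $\yes$ case and $\unif([q])^{\otimes k}$-distributed constraints in the $\no$ case. The two features that make this go through are exactly what the construction above supplies: the $\yes$ and $\no$ distributions share the same (uniform) marginals — so the planted instances are first-moment indistinguishable — and $\calD^N$ is a product distribution — so the $\no$ input can be generated obliviously, coordinate by coordinate, as the reduction requires. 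I expect this final step to be the main obstacle: \Cref{thm:cgsv} is stated only for sketching algorithms, and there are CSPs (e.g.\ \MaxDiCut) for which a nontrivial $o(\sqrt n)$-sketching lower bound does \emph{not} lift to streaming, so one must check carefully that it is precisely one-wise independence — via the uniform-marginal, product-distribution pair above — that rescues the $\Omega(\sqrt n)$ communication lower bound in this case; everything before it is bookkeeping together with the $\rho(\cdot)$ identity.
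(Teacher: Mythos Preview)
The paper does not contain a proof of this statement: \Cref{thm:cgsv:one-wise} is quoted in the background section as a result of \cite{CGSV24} (their Theorem~3.12) and is used as a black box. There is therefore nothing to compare your proposal against in this paper.

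That said, your outline is essentially the argument that \cite{CGSV24} carry out. The construction of $\calD^Y$ (sample $f\sim\lambda$, then $\veca\sim\calD_f$) and $\calD^N$ (sample $f\sim\lambda$, then $\veca\sim\unif([q])^{\otimes k}$), the verification that their marginal vectors coincide, and the choice of $\lambda$ via the min--max identity for $\rho(\calF')$ are all correct and are exactly how one places a point in $K_1^Y(\calF)\cap K_{\rho(\calF)}^N(\calF)$. Your justification of the min--max identity is fine; the ``$\le$'' direction via random instances with $\Theta_\delta(n)$ constraints and a union bound over $q^n$ assignments is the standard one.

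You are also right to flag the sketching-versus-streaming issue as the nontrivial step. \Cref{thm:cgsv} as stated here yields only a \emph{sketching} lower bound, and there are pairs $(\calD^Y,\calD^N)$ with matching marginals for which no streaming lower bound is known. What makes the upgrade go through in \cite{CGSV24} is precisely the two features you isolate: $\calD^N$ has \emph{uniform} marginals (not merely marginals matching $\calD^Y$), and $\calD^N$ is a product distribution conditioned on $f$. With these, the reduction to a Boolean-Hidden-Matching/Implicit-Partition style one-way communication problem works for arbitrary streaming protocols, not just sketches. So your diagnosis of where the real content lies is accurate; the paper simply imports that content from \cite{CGSV24} rather than reproving it.
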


We remark that it is not currently known whether there exists any predicate family weakly (or even strongly) supporting one-wise independence which is nontrivially approximable to $o(n)$-space streaming algorithms.
As in \Cref{rmk:ltf}, the \textcite{Pot19} balanced LTF seems to be a good candidate for this.

\subsection{Linear space streaming lower bounds}

The following theorem due to Chou, Golovnev, Sudan, Velingker, and Velusamy gives the current best linear space lower bounds for every Max-CSP over $\Z_q^k$ in the streaming setting \cite[Theorem 4.3]{CGS+22-linear-space}\footnote{Note that their result uses the fact that $\Z_q$ is an additive group.}.
\begin{definition}[Width of $\cF$]
	For $\vecb \in \Z_q^k$ and a predicate $f:\Z_q^k \to \{0,1\}$, the $\vecb$-width of $f$, denoted $\omega_{\vecb}(f)$, is the quantity $\omega_\vecb(f) \coloneqq \frac{|\{a\in\Z_q\, |\, f(\vecb + a^k)=1\}|}q$. The {\em width} of $f$, denoted $\omega(f)$, is given by $\omega(f) \coloneqq \max_{\vecb \in \Z_q^k} \{\omega_{\vecb}(f)\}$. The \emph{width} of a \emph{family} $\cF \subseteq (\Z_q^k)^{\{0,1\}}$ is $\omega(\cF) \coloneqq \min_{f\in\cF} \{\omega(f)\}$. A family $\cF$ is {\em wide} if $\omega(\cF) = 1$. 
\end{definition}

\begin{theorem}[\cite{CGS+22-linear-space}]\label{thm:cgsvv}
    For every $k,q, \cF \subseteq (\Z_q^k)^{\{0,1\}}$ and every $\epsilon > 0$, every single-pass streaming for $(\omega(\cF),\rho(\cF)+\epsilon)$-$\maxF$ requires $\Omega(n)$ space.
\end{theorem}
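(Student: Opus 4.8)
The plan is a reduction from communication complexity, following the template of~\cite{KKS15,CGSV24,KK19,CGS+22-linear-space}. Fix $\epsilon>0$ and write $\gamma=\omega(\cF)$, $\beta=\rho(\cF)+\epsilon$. I would construct, for each $n$, a pair of distributions $\cY_n,\cN_n$ over $n$-variable instances of $\maxF$ with $m=m(n)$ constraints such that: (C) $\Psi\sim\cY_n$ has $\CSPopt(\Psi)\ge\gamma-o(1)$ with high probability; (S) $\Psi\sim\cN_n$ has $\CSPopt(\Psi)\le\beta$ with high probability; and (I) no $o(n)$-space single-pass streaming algorithm distinguishes $\cY_n$ from $\cN_n$. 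Given these, a space-$s$ streaming algorithm solving $(\gamma,\beta)$-$\maxF$ with the usual success probability would distinguish $\cY_n$ from $\cN_n$, so $s=\Omega(n)$ and the theorem follows by amplification.

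For the distributions, let $\cD^\circ\in\Delta(\cF)$ be the empirical predicate distribution of a fixed finite instance $\Psi_0$ with $\CSPopt(\Psi_0)\le\rho(\cF)+\epsilon/2$ (such $\Psi_0$ exists by the definition of $\rho(\cF)$ as a limit of infima over finite instances), and for each $f\in\cF$ fix $\vecb^{(f)}\in\Z_q^k$ attaining $\omega_{\vecb^{(f)}}(f)=\omega(f)$. Both $\cY_n$ and $\cN_n$ first plant a uniformly random hidden assignment $\vecz\in\Z_q^n$ and then draw $m$ i.i.d.\ constraints; a single constraint is drawn by sampling $f\sim\cD^\circ$, then sampling $k$ distinct indices $\vecj=(j_1,\dots,j_k)$ conditioned on $(z_{j_1},\dots,z_{j_k})=\vecb^{(f)}+a^k$ for $a\sim\unif(\Z_q)$ (in $\cY_n$) or on $(z_{j_1},\dots,z_{j_k})=\veca$ for $\veca\sim\unif(\Z_q^k)$ (in $\cN_n$), and recording the constraint $(f,\vecj)$; the conditioning is feasible since $\vecz$ is near letter-balanced with high probability. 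The marginal law of a single constraint is then identical under $\cY_n$ and $\cN_n$ (namely $f\sim\cD^\circ$, with $\vecj$ near-uniform among distinct $k$-tuples), so writing $\cD^Y,\cD^N\in\Delta(\cF\times[q]^k)$ for the respective single-constraint distributions (with the planted values $(z_{j_1},\dots,z_{j_k})$ as the $[q]^k$ component), in the language of \Cref{def:marginals} we have $\vecmu(\cD^Y)=\vecmu(\cD^N)$, where $\cD^Y\in\sgyf$ has value $\Exp_{f\sim\cD^\circ}[\omega_{\vecb^{(f)}}(f)]\ge\omega(\cF)=\gamma$ (using the shift $\vecb^{(f)}+a^k$) and $\cD^N\in\sbnf$ has value $\Exp_{f\sim\cD^\circ}[|f^{-1}(1)|/q^k]=\Exp_{\vecb\sim\unif}[\CSPval_{\Psi_0}(\vecb)]\le\CSPopt(\Psi_0)<\beta$; in particular $\kgyf\cap\kbnf\ne\emptyset$, so \Cref{thm:cgsv} already yields an $\Omega(\sqrt n)$ sketching lower bound, and the job is to amplify it. Property (C) holds because assigning the variables by the planted $\vecz$ satisfies each constraint with probability $\omega_{\vecb^{(f)}}(f)=\omega(f)$, so $\CSPval_\Psi(\vecz)$ concentrates around $\Exp_{f\sim\cD^\circ}[\omega(f)]\ge\omega(\cF)$ over the $m$ constraints. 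Property (S) is a first-moment bound: for a \emph{fixed} assignment $\vecb\in\Z_q^n$ with empirical letter-distribution $\vecnu$, a single $\cN_n$-constraint is satisfied by $\vecb$ with probability $\Exp_{f\sim\cD^\circ}[\Pr_{\vecc\sim\vecnu^{\otimes k}}[f(\vecc)=1]]+O(k^2/n)\le\max_{\mathbf p}\Exp_{f\sim\cD^\circ}[\Pr_{\vecc\sim\mathbf p^{\otimes k}}[f(\vecc)=1]]+o(1)=\max_{\mathbf p}\Exp_{\vecb'\sim\mathbf p^{\otimes n}}[\CSPval_{\Psi_0}(\vecb')]+o(1)\le\rho(\cF)+\epsilon/2+o(1)$ (maxima over $\mathbf p\in\Delta([q])$), so with $m=\omega(n/\epsilon^2)$ a Chernoff bound followed by a union bound over the $q^n$ assignments gives $\CSPopt(\Psi)\le\rho(\cF)+\epsilon=\beta$ with high probability.

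The heart of the argument is (I). Split the stream of $m$ constraints into $T=T(n)$ consecutive blocks; a space-$s$ streaming algorithm then induces a $T$-player one-way protocol with $s$-bit messages in which player $t$ owns block $t$, runs the algorithm on it starting from the state received, and forwards the resulting state. It therefore suffices to lower bound the induced $T$-player one-way communication problem --- given the blocks of hyperedges $(f,\vecj)$, decide whether they were generated by the ``$\omega$-structured'' planting ($\cY_n$) or the ``uniform'' planting ($\cN_n$) relative to the \emph{common} hidden $\vecz\in\Z_q^n$ --- and to show that $s=\Omega(n)$ is forced \emph{uniformly in $T$}, so that the stream may be taken arbitrarily long (keeping the trivial $\tilde O(n)$-space subsampling distinguisher out of scope). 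This is the $\Z_q$- and arity-$k$-generalization of the communication lower bound underlying the optimal \textsc{Max-Cut} bound of~\cite{KK19}: one argues by a hybrid / round-elimination over the $T$ blocks, together with a Fourier-analytic step showing that in order to shift its acceptance probability a player's $s$-bit message must carry $\Omega(n)$ bits of correlation with the hidden $\vecz$, quantified via hypercontractivity of the relevant noise operator.

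The main obstacle is precisely this communication lower bound: pushing the Fourier/hypercontractivity analysis of~\cite{KKS15,CGSV24,KK19} from the $\sqrt n$ regime (constant $T$) to the $n$ regime (growing $T$), while working over the additive group $\Z_q$ rather than $\{0,1\}$ --- which is where the $\Z_q$-valued Fourier characters, and hence the additive-group hypothesis on the alphabet, enter --- and with arity-$k$ hyperedges supported on $\supp(\cD^\circ)$ rather than on a perfect matching. The remaining ingredients (the completeness computation from the definition of width, the first-moment soundness estimate, the verification that the marginal vectors coincide, and the stream-to-protocol reduction) are routine once this bound is in hand.
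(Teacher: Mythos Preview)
The paper does not prove this theorem. \Cref{thm:cgsvv} is stated as a background result, cited directly from \cite{CGS+22-linear-space} (their Theorem~4.3), and is used only as an input to \Cref{thm:wide:formal}. There is therefore no ``paper's own proof'' to compare your proposal against.

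As a sketch of the argument in \cite{CGS+22-linear-space}, your outline is broadly in the right spirit: planted YES/NO distributions built around the width-achieving shifts $\vecb^{(f)}+a^k$ versus uniform $k$-tuples, completeness from the planted assignment, soundness via first moment plus union bound, and indistinguishability via a multi-player one-way communication lower bound generalizing \cite{KK19} to $\Z_q$ and arity $k$. Two caveats. First, you correctly identify that the entire difficulty sits in step~(I), and you essentially defer it; the hypercontractivity/Fourier argument over $\Z_q$ with $k$-uniform hyperedges and growing $T$ is genuinely nontrivial, and your paragraph does not go beyond naming the ingredients. Second, your soundness calculation quietly relies on the identity
\[
\Exp_{f\sim\cD^\circ}\bigl[\Pr_{\vecc\sim \mathbf p^{\otimes k}}[f(\vecc)=1]\bigr]
\;=\;
\Exp_{\vecb'\sim \mathbf p^{\otimes n}}\bigl[\CSPval_{\Psi_0}(\vecb')\bigr],
\]
which holds because $\cD^\circ$ is the empirical predicate distribution of $\Psi_0$ and the constraint tuples in $\Psi_0$ have distinct coordinates; it would be worth making this explicit, since otherwise it is not obvious why the value of the \emph{random} instance is controlled by $\CSPopt(\Psi_0)$ rather than merely by $\rho(\cF)$ in some looser sense.
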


The above theorem proves optimal hardness results for some $\maxF$ problems such as $\MaxCut$, $\MaxDiCut$ and $\MaxkAnd$.

\subsection{LP approximations for CSPs}\label{sec:basic lp}

\begin{definition}[{The basic LP, \cite{GT17}}]\label{def:basic LP}
Let $\calF$ be a predicate family and $\Phi$ an instance of $\maxF$.
The basic LP corresponding to $\Phi$ is as follows.
For every constraint $C = (f,\bj) \in \Phi$, where $f \in \calF$ and $\bj = (j_1,\ldots,j_k)$ are distinct variables in $[n]$,
we create $q^k$ corresponding LP variables $(y_{C,\ba})_{\ba\in [q]^k}$,
and for every variable $i \in [n]$, we create $q$ corresponding LP variables $(x_{i,b})_{b \in [q]}$.
The LP is then:

\begin{empheq}[left=\empheqlbrace]{alignat*=3}
    & \mathrm{maximize} \quad && \Exp_{C = (f,\bj) \in \Phi} \sum_{\ba \in [q]^k} f(\ba) \cdot y_{C,\ba} && \\
    & \mathrm{s.t.} && \sum_{b \in [q]} x_{i,b} = 1, && \forall i \in [n], \\
    & && \sum_{\ba \in [q]^k : a_\ell = b} y_{C,\ba} = x_{i,b}, && \forall C = (f, \bj) \in \Phi, j_\ell = i, b \in [q], \\
    & && x_{i,b} \ge 0, && \forall i \in [n], b \in [q], \\
    & && y_{C,\ba} \ge 0, && \forall C \in \Phi, \ba \in [q]^k.
\end{empheq}
We let $\LPopt(\Phi)$ denote the corresponding value of the basic LP.
\end{definition}

For a constraint $C$, the collection of variables $(y_{C,\ba})_{\ba \in [q]^k}$ can be viewed as a distribution $Y_C$ on $[q]^k$, called a \emph{local distribution}.
The following alternative formulation of the basic LP is therefore easily seen to be equivalent to \Cref{def:basic LP}.

\begin{definition}[Distributional view of basic LP]\label{def:basic LP:dist}
    Optimize the following over the tuples of local distributions $(Y_C \in \Delta([q]^k))$:
    \begin{empheq}[left=\empheqlbrace]{alignat*=3}
    & \mathrm{maximize} \quad && \Exp_{C = (f,\bj) \in \Phi} \bracks*{ \Exp_{\ba \sim Y_C} f(\ba) } && \\
    & \mathrm{s.t.} && \Pr_{\ba \sim Y_C}[a_\ell = b] = \Pr_{\ba \sim Y_{C'}}[a_{\ell'} = b], \qquad && \forall C = (f, \bj), C' = (f, \bj') \in \Phi, j_\ell = j'_{\ell'}, b \in [q].
\end{empheq}
The condition implies that there exists a one-wise marginal distribution $X_i \in \Delta([q])$ for every $i \in [n]$,
such that for every constraint $C = (f,\bj) \in \Phi$ s.t. $j_\ell = i$,
the marginal distribution of $Y_C$ on the $\ell$-th coordinate is $X_i$.
\end{definition}
Note that these local distributions need not be marginals of a single global distribution
(and this is why the basic LP is a strict relaxation of the CSP),
but they do at least ensure that the one-wise marginals of the local distributions are consistent:
If there are two constraints $C = (f, \bj)$ and $C' = (f', \bj')$ both containing a variable $i \in [n]$, i.e., $j_\ell = j'_{\ell'} = i$,
then the local distributions $Y_C$ and $Y_{C'}$ should agree on the corresponding coordinates.

\begin{remark}
One can consider a stronger LP requiring consistency on higher-degree marginals; this is essentially the so-called ``Sherali-Adams'' hierarchy~\cite{SA90}.
Ghosh and Tulsiani~\cite{GT17} showed that integrality gaps for the basic LP extends to integrality gaps for the level-$\Omega(\frac{\log n}{\log\log n})$ Sherali-Adams relaxation.
By results of Chan, Lee, Raghavendra, and Steurer~\cite{CLRS16} and Kothari, Meka, and Raghavendra~\cite{KMR22},
these integrality gaps in turn imply bounds against polynomial-sized extended LP formulations.
\end{remark}

\section{Results}

\subsection{Lower bounds against streaming approximability}

\begin{theorem}[Detailed version of \Cref{thm:int gap impl CGSV}]
    Suppose there exists a $(\gamma,\beta)$-integrality gap instance for the basic LP for $\maxF$.
    Then $\kgyf \cap \kbnf \ne \emptyset$, and therefore, for every $\epsilon > 0$,
    every sketching algorithm for $(\gamma-\epsilon,\beta+\epsilon)$-$\maxF$ on $n$-variable instances uses $\Omega(\sqrt n)$ space.
\end{theorem}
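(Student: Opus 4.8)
The ``therefore'' clause will be immediate from \Cref{thm:cgsv}(2), so the entire task is to produce, out of a $(\gamma,\beta)$-integrality gap instance $\Phi$ for the basic LP, a marginal vector lying in $\kgyf \cap \kbnf$. The plan is to exhibit one such vector, witnessed on the $\yes$ side by a distribution that encodes an LP solution of value $\ge\gamma$, and on the $\no$ side by a distribution that encodes a product-of-marginals assignment to $\Phi$.

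Concretely, I would first pass to the distributional view of the basic LP (\Cref{def:basic LP:dist}): since $\LPopt(\Phi) \ge \gamma$ and the LP is a linear optimization over a compact polytope, there is a feasible solution given by local distributions $(Y_C \in \Delta([q]^k))_{C \in \Phi}$ whose one-wise marginals are consistent and described by distributions $(X_i \in \Delta([q]))_{i\in[n]}$ --- so the $\ell$-th marginal of $Y_C$ equals $X_{j_\ell}$ whenever $C = (f,\bj)$ --- with $\Exp_{C = (f,\bj) \in \Phi}[\Exp_{\ba\sim Y_C}[f(\ba)]] \ge \gamma$. Now define $\cD_Y \in \Delta(\cF \times [q]^k)$ to sample a constraint $C = (f,\bj) \in \Phi$ with probability proportional to its weight and then output $(f,\ba)$ for $\ba \sim Y_C$; and define $\cD_N$ identically, except that the coordinates $a_1,\dots,a_k$ are drawn \emph{independently} with $a_\ell \sim X_{j_\ell}$. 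The argument then has three parts. \emph{(i)} $\cD_Y \in \sgyf$: immediate, as $\Exp_{(f,\ba)\sim\cD_Y}[f(\ba)]$ is exactly the LP objective value of $(Y_C)$, which is $\ge\gamma$. \emph{(ii)} $\vecmu(\cD_Y) = \vecmu(\cD_N)$: in both distributions, conditioned on the sampled constraint $C=(f,\bj)$, the $\ell$-th coordinate is distributed as $X_{j_\ell}$ --- for $\cD_Y$ by the LP consistency constraint, for $\cD_N$ by construction --- so the two marginal vectors coincide. \emph{(iii)} $\cD_N \in \sbnf$: conditioned on $C=(f,\bj)$, the coordinates of $\ba$ under $\cD_N$ are mutually independent, so applying the $\no$-resampling of \Cref{def:sysn} coordinatewise again produces an independent tuple whose $\ell$-th coordinate has distribution $\widetilde X_{j_\ell} \coloneqq \Exp_{a \sim X_{j_\ell}}[\calP_a]$; interchanging expectations and using that the variables within a constraint are distinct, the $\no$-objective of $\cD_N$ then equals $\Exp_{\bb^\star}[\CSPval_\Phi(\bb^\star)]$, where $\bb^\star$ assigns each variable $i$ an independent value from $\widetilde X_i$. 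Since the optimum dominates the value of any randomized assignment, this is at most $\CSPopt(\Phi) \le \beta$. Combining (i)--(iii) gives $\vecmu(\cD_Y) = \vecmu(\cD_N) \in \kgyf \cap \kbnf$, and \Cref{thm:cgsv}(2) yields the stated $\Omega(\sqrt n)$-space sketching lower bound for $(\gamma-\epsilon,\beta+\epsilon)$-$\maxF$.

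The step I expect to require the most care is \emph{(iii)}: one must verify that the $\no$-side operation of \Cref{def:sysn}, applied to the product-structured distribution $\cD_N$, genuinely collapses to ``evaluate $\Phi$ on an independent, marginal-respecting random assignment,'' so that the bound by $\CSPopt(\Phi)$ applies. This is exactly the point at which the within-constraint distinctness of variables and the one-wise-consistency of the LP solution are both used, and it is where one has to be faithful to the precise description of the $\yes$/$\no$ marginal polytopes of \cite{CGSV24}. The remaining steps are routine bookkeeping.
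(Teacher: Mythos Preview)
Your proposal is correct and follows essentially the same approach as the paper: both construct $\cD^\yes$ by sampling a constraint and then its local LP distribution $Y_C$, and $\cD^\no$ by sampling a constraint and then the product of the one-wise marginals $X_{j_\ell}$, and both verify the $\no$ condition by collapsing the resampling to the expected CSP value of an independent random assignment to $\Phi$, bounded by $\CSPopt(\Phi)\le\beta$. If anything, your write-up is slightly more explicit than the paper's in spelling out step~(ii), that $\vecmu(\cD_Y)=\vecmu(\cD_N)$ via the LP's one-wise consistency.
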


\begin{proof}
Our proof is via a constructive transformation from an integrality gap instance to a pair of distributions with matching marginals.

Suppose we have a $(\gamma,\beta)$-integrality gap instance for $\maxF$.
That is, there is an instance $\Phi$ of $\maxF$ s.t. $\CSPopt_\Phi \le \beta$ but $\LPopt_\Phi \ge \gamma$,
and the latter is witnessed by a one-wise-consistent assignment of local distributions $(Y_C \in \Delta([q]^k))_{C \in \Phi}$.
Let $X_i$ denote the marginal distribution variable $i \in [n]$.

Unpacking the definitions, to show that $\kgyf \cap \kbnf \ne \emptyset$, we must show that there exist distributions $\calD^\yes,\calD^\no \in \Delta(\calF \times [q]^k)$
s.t. $\bmu(\calD^\yes) = \bmu(\calD^\no)$, $\calD^\yes \in \sgyf$, and $\calD^\no \in \sbnf$.
In turn, these latter conditions mean that \[
\Exp_{(f,\ba) \sim \calD^\yes} [f(\ba)] \ge \gamma
\quad\text{and}\quad \forall (\cP_\sigma \in \Delta([q]))_{\sigma \in [q]}, \Exp_{(f,\ba) \sim \calD^\no} \bracks*{ \Exp_{\substack{\bb\in [q]^k, \\ b_\ell \sim \calP_{a_\ell}}} [f(\bb)] } \leq \beta.
\]

Our construction is the following:
\begin{itemize}
\item We set $\calD^\yes$ to be the distribution where we sample $C = (f,\bj) \sim \Phi$,
sample $\ba \sim Y_C$ (where $Y_C$ is the local distribution for the constraint $C$),
and output the pair $(f,\ba)$.
\item We set $\calD^\no$ to be the distribution where we sample $C = (f,\bj) \sim \Phi$,
sample $\ba = (a_1,\ldots,a_k)$ by sampling each $a_i \sim X_{j_i}$ independently,
and output $(f,\ba)$.
\end{itemize}
To analyze this construction, we first observe that \[
\Exp_{(f,\ba)\sim\calD^\yes} [f(\ba)] = \Exp_{C=(f,\bj)\sim\Phi} \bracks*{ \Exp_{\ba \sim Y_C} [f(\ba)] } \ge \gamma, \]
where the equality uses the definition of $\calD^\yes$ and the inequality is the assumption on $\LPopt_\Phi$.
On the other hand, fix any distributions $\calP_\sigma \in \Delta([q])$ for $\sigma \in [q]$.
Then
\begin{multline*}
\Exp_{(f,\ba) \sim \calD^\no} \bracks*{ \Exp_{\substack{\bb\in [q]^k, \\ b_\ell \sim \calP_{a_\ell}}} [f(\bb)] } = \Exp_{(f,\bj) \sim \Phi} \bracks*{ \Exp_{\substack{\ba \in [q]^k, \\ a_\ell \sim X_{j_\ell}}} \bracks*{ \Exp_{\substack{\bb\in [q]^k, \\ b_\ell \sim \calP_{a_\ell}}} [f(\bb)] } } = \Exp_{(f,\bj) \sim \Phi} \bracks*{ \Exp_{\substack{\bx \in [q]^n, \\ x_i \sim X_i}} \bracks*{ \Exp_{\substack{\bz \in [q]^n, \\ z_i \sim \calP_{x_i}}} \bracks*{  f(z_{j_1},\ldots,z_{j_\ell}) } } } \\
= \Exp_{\substack{\bx \in [q]^n, \\ x_i \sim X_i}} \bracks*{ \Exp_{\substack{\bz \in [q]^n, \\ z_i \sim \calP_{x_i}}} \bracks*{ \Exp_{(f,\bj) \sim \Phi} [f(z_{j_1},\ldots,z_{j_\ell})] } }
= \Exp_{\substack{\bx \in [q]^n, \\ x_i \sim X_i}} \bracks*{ \Exp_{\substack{\bz \in [q]^n, \\ z_i \sim \calP_{x_i}}} \bracks*{ \CSPval_\Phi(\bz) } }
\le \beta,
\end{multline*}
where the equalities use, respectively, the definition of $\calD^\no$, the fact that the marginal distribution of $(z_{j_1},\ldots,z_{j_\ell})$ is precisely the marginal distribution of $\bb$,
rearranging expectations, and the definition of CSP value, and the inequality is the assumption on $\CSPopt$.
\end{proof}

\subsection{Lower bounds against LP approximability}

\begin{theorem}[Detailed version of \Cref{thm:one-wise}]\label{thm:one-wise:formal}
    Let $\calF \subseteq \{f : [q]^k \to \{0,1\}\}$.
    Suppose that $\calF$ weakly supports one-wise independence.
    Then for every $\epsilon > 0$, there exists $(1,\rho(\calF)+\epsilon)$-integrality gap instance for the basic LP for $\maxF$, i.e., $\maxF$ is LP-approximation resistant.
\end{theorem}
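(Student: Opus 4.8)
The plan is to directly exhibit the integrality gap instance, starting from a near-worst-case instance over the subfamily supplied by the hypothesis. Let $\calF' \subseteq \calF$ be a subfamily with $\rho(\calF') = \rho(\calF)$ in which every predicate $f$ supports one-wise independence; fix, for each $f \in \calF'$, a distribution $\calD_f \in \Delta([q]^k)$ supported on satisfying assignments of $f$ with all one-wise marginals equal to $\unif([q])$. The observation that makes everything go through is that such ``uniform-marginal'' distributions are automatically mutually consistent in the sense of the distributional view of the basic LP (\Cref{def:basic LP:dist}): if two local distributions are each of this form, then on any shared coordinate both have marginal $\unif([q])$, so the LP consistency constraints are satisfied, with the common one-wise marginal $X_i = \unif([q])$ for every variable $i$.

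First I would extract the instance. By the definition of $\rho$ in \eqref{eq:rho} (the per-$n$ infimum is non-increasing in $n$, since one may pad an instance with unused variables, so the limit equals the infimum over $n$), for every $\epsilon > 0$ there is some $n$ and an instance $\Psi$ of $\textsc{Max-CSP}(\calF')$ on $n$ variables with $\CSPopt_\Psi \le \rho(\calF') + \epsilon = \rho(\calF) + \epsilon$. I claim $\Psi$, regarded as an instance of $\maxF$ via $\calF' \subseteq \calF$, is a $(1,\rho(\calF)+\epsilon)$-integrality gap instance. Soundness is immediate: the satisfiability of an assignment depends only on the constraints present in $\Psi$ (all of which use predicates from $\calF'$), so $\CSPopt_\Psi$ is unchanged and is at most $\rho(\calF)+\epsilon$.

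For completeness I would produce a feasible basic-LP solution of value $1$. For each constraint $C = (f,\bj)$ of $\Psi$ set the local distribution $Y_C \coloneqq \calD_f$, and set $X_i \coloneqq \unif([q])$ for every variable $i$. Because each $\calD_f$ has uniform one-wise marginals, $Y_C$ has marginal $X_i$ on the coordinate corresponding to each variable $i$ occurring in $\bj$, so the consistency constraints of \Cref{def:basic LP:dist} hold. Because each $\calD_f$ is supported on satisfying assignments of $f$, we get $\Exp_{\ba \sim Y_C}[f(\ba)] = 1$ for every constraint $C$, hence the LP objective $\Exp_{C=(f,\bj)\sim\Psi}[\Exp_{\ba\sim Y_C} f(\ba)]$ equals $1$; since the objective of any feasible solution is at most $1$, this gives $\LPopt_\Psi = 1$. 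Combined with the soundness bound, $\Psi$ is the desired integrality gap instance, and letting $\epsilon \to 0$ shows $\maxF$ is LP-approximation resistant.

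I do not expect a genuine obstacle here: the argument is essentially the observation that one-wise independence is exactly the property needed to make the all-uniform local distributions feasible for the basic LP with objective $1$. The only points that need a little care are (i) unwinding the limit in \eqref{eq:rho} to get an honest finite instance with CSP value within $\epsilon$ of $\rho(\calF')$, and (ii) noting that passing from $\textsc{Max-CSP}(\calF')$ to $\maxF$ does not change $\CSPopt$, so that the hypothesis $\rho(\calF') = \rho(\calF)$ can be applied. (One also implicitly uses the standard device, cf.\ the footnote in \Cref{sec:csp}, for handling predicates applied to repeated variables, but this is immaterial to the construction.)
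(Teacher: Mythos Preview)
Your proposal is correct and follows essentially the same approach as the paper: set each local distribution $Y_C$ to the one-wise independent distribution $\calD_f$ (so the LP objective is $1$ and all marginals are uniform, hence consistent), and take a near-worst-case instance for soundness. The only cosmetic difference is that the paper handles the ``weakly supports'' case with a one-line ``assume without loss of generality that $\calF$ strongly supports one-wise independence,'' whereas you unpack this reduction explicitly via the subfamily $\calF'$ and the equality $\rho(\calF') = \rho(\calF)$.
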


\begin{proof}
    We assume without loss of generality that $\calF$ strongly supports one-wise independence.
    The theorem follows from the following claim: \emph{Every instance $\Phi$ of $\maxF$ has $\LPopt_\Phi = 1$.}
    Given this, the theorem follows immediately from the existence of instances $\Phi$ with value $\CSPval_\Phi \le \rho(\calF) + \epsilon$ for every $\epsilon > 0$ (as in the definition of $\rho(\calF)$).
    Indeed, let $\Phi$ be any instance of $\maxF$.
    Then we can simply set the local distribution $Y_C$ for every constraint $C = (f,\bj) \in \Phi$ to the corresponding distribution $\calD_f$.
    This solution has objective value $1$, since every local distribution is supported only on satisfying assignments,
    and the local distributions have consistent marginals, since they are all uniform on $[q]$.
\end{proof}

\begin{theorem}[Detailed version of \Cref{thm:wide}]\label{thm:wide:formal}
     For every $k,q, \cF \subseteq (\Z_q^k)^{\{0,1\}}$ and every $\epsilon > 0$, there exists a $(\omega(\cF),\rho(\cF)+\epsilon)$-integrality gap instance for $\maxF$.
\end{theorem}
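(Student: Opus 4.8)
The plan is to follow the strategy of \Cref{thm:one-wise:formal}: instead of exhibiting a gap instance directly, I would prove the stronger statement that \emph{every} instance $\Phi$ of $\maxF$ satisfies $\LPopt_\Phi \ge \omega(\cF)$. Granting this, the desired gap instance is handed to us by the definition of $\rho(\cF)$: by \Cref{eq:rho}, for each $\epsilon > 0$ there is an instance $\Phi$ of $\maxF$ with $\CSPopt_\Phi \le \rho(\cF)+\epsilon$, and together with $\LPopt_\Phi \ge \omega(\cF)$ this exhibits $\Phi$ as a $(\omega(\cF),\rho(\cF)+\epsilon)$-integrality gap instance. (As usual we are interested in the regime $\omega(\cF) > \rho(\cF)$; otherwise the claimed gap instance is vacuous.)

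To prove that $\LPopt_\Phi \ge \omega(\cF)$ for every $\Phi$, I would use the distributional form of the basic LP (\Cref{def:basic LP:dist}) and build a feasible solution out of the ``constant-shift'' distributions afforded by the additive group structure of $\Z_q$. For each predicate $f \in \cF$, fix a width-optimal shift $\vecb_f \in \Z_q^k$, i.e., one achieving $\omega_{\vecb_f}(f) = \omega(f)$, so that $\omega_{\vecb_f}(f) \ge \omega(\cF)$. Now given $\Phi$, for each constraint $C = (f,\vecj) \in \Phi$ let the local distribution $Y_C$ be the distribution of the random vector $\vecb_f + a^k$ with $a \sim \unif(\Z_q)$, and let $X_i = \unif(\Z_q)$ for each $i \in [n]$. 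Feasibility holds because the $\ell$-th one-wise marginal of $Y_C$ is the distribution of $(\vecb_f)_\ell + a$, which is $\unif(\Z_q)$ since translation by a fixed element permutes $\Z_q$; thus every one-wise marginal of every $Y_C$ equals $\unif(\Z_q)$, and all the consistency constraints of \Cref{def:basic LP:dist} reduce to the identity $\tfrac1q = \tfrac1q$. For the objective, by the definition of $\omega_{\vecb_f}$ we have $\Exp_{\veca \sim Y_C}[f(\veca)] = \Exp_{a \sim \unif(\Z_q)}[f(\vecb_f + a^k)] = \omega_{\vecb_f}(f) = \omega(f) \ge \omega(\cF)$ for every $C = (f,\vecj)\in\Phi$, so averaging over $C \sim \Phi$ gives LP objective value at least $\omega(\cF)$.

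I do not expect any real obstacle: the one substantive point is exactly the observation underlying \Cref{thm:cgsvv}, namely that the group structure on $\Z_q$ lets one ``fractionally satisfy'' each predicate up to its width using a local distribution with uniform marginals, which are then automatically consistent across overlapping constraints. The only minor care needed is with the limit defining $\rho(\cF)$: since $\inf_{\Psi\text{ on }n\text{ variables}}\{\CSPopt_\Psi\}$ is nonincreasing in $n$ (pad any instance with extra unused variables), this limit equals the infimum of $\CSPopt_\Psi$ over all instances, which is what actually guarantees, for each $\epsilon>0$, the existence of an instance with value at most $\rho(\cF)+\epsilon$.
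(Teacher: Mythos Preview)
Your proposal is correct and matches the paper's proof essentially line for line: both pick, for each $f\in\cF$, a width-optimal shift $\vecb_f$, set each local distribution $Y_C$ to be uniform over $\{\vecb_f+a^k:a\in\Z_q\}$, note that all one-wise marginals are uniform on $\Z_q$ (hence consistent), and conclude that the LP objective is at least $\omega(\cF)$ on any instance, in particular on one with $\CSPopt_\Phi\le\rho(\cF)+\epsilon$. Your extra remarks about translation invariance and the monotonicity of the infimum defining $\rho(\cF)$ are fine but not needed beyond what the paper assumes.
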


\begin{proof}
Like in the proof of \cref{thm:one-wise:formal}, we pick an instance $\Phi$ such that $\CSPval_\Phi \le \rho(\calF) + \epsilon$. For $f\in \cF$, let $\vecb_f\in \Z_q^k \triangleq \arg \max_{\vecb \in \Z_q^k} \omega_{\vecb}(f)$. Let $\calD_f$ denote the uniform distribution over $\{\vecb_f + a^k: a\in \Z_q\}$. Then we can set the local distribution $Y_C$ for every constraint $C = (f,\bj) \in \Phi$ to the corresponding distribution $\calD_f$. Observe that these local distributions have consistent marginals since they are all uniform on $\Z_q$. It follows from the definition of $\vecb_f\in \Z_q^k$ that $\Exp_{\ba \sim Y_C} f(\ba) = \omega(f)$. Hence, the solution has objective value at least $\omega(\cF)=\min_{f\in\cF} \{\omega(f)\}$.
\end{proof}

Thus, for every previously known linear space lower bounds for Max-CSPs in the streaming setting, we construct corresponding integrality gap instances for their basic LPs.

\printbibliography
\end{document}